\documentclass[10pt]{article}

\usepackage{amssymb,amsmath}
\usepackage{mathrsfs,color}
\usepackage{amsthm}
\usepackage{mathtools}
\usepackage{cancel}
\usepackage{xspace}
\DeclareRobustCommand\mos[1]{\mathrel{|}\joinrel
\stackrel{#1}{\mathrel{=}}}
\usepackage[ruled,lined]{algorithm2e}

\DeclareRobustCommand\mfol{\mos{}_{{\ }_{FOL}}}

\DeclareRobustCommand\vds[1]{\,\mathrel{|}\joinrel\joinrel\joinrel
\frac{#1}{ \ \ \ }}

\makeatletter

\newcommand{\tr}{\sigma}

\DeclareRobustCommand\mos[1]{\mathrel{|}\joinrel
\stackrel{#1}{\mathrel{=}}}
\newcommand{\mosn}[1]{\not \mos{#1}}

\DeclareRobustCommand\vds[1]{\,\mathrel{|}\joinrel\joinrel\joinrel
\frac{#1}{ \ \ \ }}

\newcommand{\mb}{\scriptscriptstyle{\Box}}
\newcommand{\lra}{\leftrightarrow}
\newcommand{\vp}{\varphi}
\newcommand{\srb}{\sigma}

\usepackage{bm,listings,todonotes}

\newcommand{\te}[1]{\texttt{#1}}

\newtheorem{theorem}{Theorem}
\newtheorem{lemma}[theorem]{Lemma}

\newtheorem{remark}[theorem]{Remark}

\newtheorem{proposition}[theorem]{Proposition}
\newtheorem{definition}[theorem]{Definition}


%
%

  \title{Many-Sorted Hybrid Modal Languages}
  \author{Ioana Leu\c stean, 
	Natalia Moang\u a  and Traian Florin Șerbănuță \\
{\small Faculty of Mathematics and Computer Science, University of Bucharest,}\\
{\small Academiei nr.14, sector 1, C.P. 010014,  Bucharest, Romania} \\
{\small ioana@fmi.unibuc.ro}  \hspace*{0.3cm}  {\small natalia.moanga@drd.unibuc.ro}  \hspace*{0.3cm}  {\small traian.serbanuta@fmi.unibuc.ro }
} 

\date{}

\begin{document}

\maketitle

\begin{abstract}
We continue our investigation into hybrid polyadic multi-sorted logic with a focus on expresivity
related to the operational and axiomatic semantics of programming languages, and relations with first-order logic.
We identify a fragment of the full logic, for which we prove sound and complete deduction
and we show that it is powerful enough to represent both the programs and their semantics in an uniform way.
Although weaker than other hybrid systems previously developed, this system is expected to have better computational properties.
Finally, we provide a standard translation from full hybrid many-sorted logic to first-order logic.

\medskip
\noindent {\em Keywords}: Hybrid modal logic, Many-sorted logic, Standard Translation,\\Operational semantics, Program verification 

\end{abstract}


\section{Introduction}\label{sec:intro}
This paper presents several hybrid modal logic systems based on the initial many-sorted structure we have developed~\cite{noi} and progressively incorporating different operators and binders to it. These findings have enabled us to bridge the gap between the full many-sorted polyadic modal logic and First-Order Logic by developing a standard translation between them.
 
In Section \ref{sec:intro} we recall our many-sorted polyadic modal logic, $\mathcal{K}_{\Sigma}$, introduced in \cite{noi}, by presenting all the necessary information:  the syntax, the semantics and the deductive system, in order for the reader to get familiarized with this logic. In Section \ref{sec:basichyb}, we propose and study $\mathcal{H}_{\Sigma}(@_z)$ a hybrid extension of $\mathcal{K}_{\Sigma}$ and we prove its soundness and completeness. Moreover, we provide an example of using this system to axiomatically express operational semantics and to derive proofs for statements concerning program executions. Sections \ref{sec:hyb} and \ref{sec2} recall two related hybrid systems introduced in~\cite{noi3}: $\mathcal{H}_{\Sigma}(\forall)$, an orthodogal extension of $\mathcal{K}_{\Sigma}$; and $\mathcal{H}_{\Sigma}(@_z,\forall)$, a common extension of both $\mathcal{H}_{\Sigma}(@_z)$ and $\mathcal{H}_{\Sigma}(\forall)$.  The paper concludes by providing a standard translation from $\mathcal{H}_{\Sigma}(@_z,\forall)$ to first order logic, showing that any many-sorted modal formula corresponds to a first-order formula from its corresponding first-order language.

\subsection{Preliminaries: a many-sorted polyadic modal logic}
\label{subsec:prel}
 For a general background on modal logic we refer to \cite{mod}. Basically, on top of modal logic we have added the sorts for each variable and the many-sorted polyadic operators $\sigma$ together with the corresponding relation. The polyadic operators are defined also in \cite{mod}, but in a mono-sorted version.
 
Our language is determined by a fixed, but arbitrary,  many-sorted signature ${\bf \Sigma}=(S, \Sigma)$ and an $S$-sorted set of  propositional variables $P=\{P_s\}_{s\in S}$ such that  $P_s\neq \emptyset$ for any $s\in S$ and  $P_{s_1} \cap P_{s_2} = \emptyset $ for any  $s_1 \neq s_2$ in $S$.
For any $n\in {\mathbb N}$ and $s,s_1,\ldots, s_n\in S$ we denote 
$\Sigma_{s_1\ldots s_n,s}=\{\sigma\in \Sigma\mid \sigma:s_1 \cdots s_n\to s\}$.

The set of formulas of ${\mathcal K}_{ \Sigma}$, the many-sorted polyadic modal logic  defined in \cite{noi}, is an $S$-indexed family  inductively defined by:

 \begin{center}
 $\phi_s ::=  p\,|\,\neg\phi_s\,|\,{\phi}_s\vee {\phi}_s\,|\, \sigma ({\phi}_{s_1}, \ldots , {\phi}_{s_n} )$
\end{center}

where $s\in S$, $p\in P_s$ and  $\sigma \in \Sigma_{s_1 \cdots s_n,s}$.

We use the classical definitions of the derived logical connectors: for any $\sigma\in \Sigma_{s_1 \ldots s_n,s}$ the {\em dual operation} is
$\sigma^{\mb} (\phi_1, \ldots , \phi_n ) := \neg\sigma (\neg\phi_1, \ldots , \neg\phi_n ).$

In the sequel,  by $\phi_s$ we mean that $\phi$ is a formula of sort $s\in S$. Similarly, $\Gamma_s$ means that $\Gamma$ is a set of formulas of sort $s$. When the context uniquely determines  the sort of a state symbol, we shall omit the subscript. 

The deductive system is presented in Figure \ref{fig:k1}.

\begin{figure}[h]
\centering

{\bf The system   ${\mathcal K}_{ \Sigma}$}

\begin{itemize}
\item For any $s\in S$, if $\phi$ is a formula of sort $s$ which is a theorem in propositional logic, then $\phi$ is an axiom. 
\item Axiom schemes: for any $\sigma\in \Sigma_{s_1\cdots s_n,s}$ and for any formulas $\phi_1,\ldots, \phi_n,\phi,\chi$ of appropriate sorts, the following formulas are axioms:

\begin{tabular}{rl}
$(K_\sigma)$ & $\sigma^{\mb}(\ldots,\phi_{i-1},\phi\rightarrow\chi,\phi_{i+1}, \ldots)\to$\\ &\hspace*{0.5cm}$( \sigma^{\mb}(\ldots ,\phi_{i-1}, \phi, \phi_{i+1},\ldots) \to \sigma^{\mb}(\ldots ,\phi_{i-1}, \chi, \phi_{i+1},\ldots))$\\
$(Dual_\sigma)$& $\sigma (\psi_1,\ldots ,\psi_n )\leftrightarrow \neg \sigma^{\mb} (\neg \psi_1,\ldots ,\neg \psi_n )$
\end{tabular}

\item Deduction rules: {\em Modus Ponens} and {\em Universal Generalization}
\medskip

\begin{tabular}{rl}
$(MP)$ & if $\vds{s}\phi$ and $\vds{s}\phi\to \psi$ then 
$\vds{s}\psi$\\
$(UG)$ &  if $\vds{s_i}{\phi}$ then $\vds{s}\sigma^{\mb} (\phi_1, .. ,\phi, ..\phi_n)$
\end{tabular}
\end{itemize} 
\caption{$(S,\Sigma)$ modal logic}\label{fig:k1}
\end{figure}

In order to define the semantics we introduce $(S,\Sigma)$\textit{-frames} and $(S,\Sigma)$\textit{-models}. An $(S,\Sigma)${\em -frame} is a tuple 
 $\mathcal{F} =({W},(R_\sigma)_{\sigma\in\Sigma})$
 such that:
\begin{itemize}
\item  ${W} =\{ W_s \}_{s\in S}$ is an  $S$-sorted set of worlds and $W_s\neq \emptyset$ for any $s\in S$,
\item  ${R}_{\sigma} \subseteq  W_s \times W_{s_1} \times \ldots \times W_{s_n}$  for any $\sigma \in \Sigma_{s_1 \cdots s_n,s}$.
\end{itemize}

An $(S,\Sigma)$-{\em model based on} $\mathcal{F}$ is a pair  ${\mathcal M}= ({\mathcal F},V)$ where $V =\{V_s\}_{s\in S}$ such that $V_s : P_s \to \mathcal{P}(W_s)$ for any $s\in S$. The model $\mathcal{M}= (\mathcal{F},V)$ will be simply denoted as  $\mathcal{M}= ({W}, (R_\sigma)_{\sigma\in\Sigma},V)$. For $s\in S$, $w\in W_s$ and  $\phi$ a formula of sort $s$, the  many-sorted  \textit{satisfaction relation} $\mathcal{M},w\mos{s} \phi$
is inductively defined as follows:
\begin{itemize}
\item $\mathcal{M},w \mos{s} p$ iff $w\in V_s(p)$
\item $\mathcal{M},w \mos{s} \neg \psi$ iff $\mathcal{M},w \mosn{s}\psi$
\item $\mathcal{M},w \mos{s} \psi_1 \vee \psi_2$ iff $\mathcal{M},w \mos{s} \psi_1$ or $\mathcal{M},w \mos{s} \psi_2$ 
\item if $\sigma\in \Sigma_{s_1 \ldots s_n,s} $,  then $\mathcal{M},w \mos{s} \sigma(\phi_1, \ldots , \phi_n )$ iff  for any $i \in [n]$ there exist $w_i \in W_{s_i}$ such that ${R}_{\sigma} ww_1\ldots w_n$ and
 $\mathcal{M},w_i  \mos{s_i} \phi_i$.
\end{itemize}

\begin{definition}[Validity and satisfiability] Let $s\in S$ and assume $\phi$ is a formula of sort $s$. Then $\phi$ is {\em satisfiable} if  ${\mathcal M},w\mos{s}\phi$ for some model $\mathcal M$ and some $w\in W_s$.  The formula $\phi$ is {\em valid} in a model $\mathcal M$ if ${\mathcal M},w\mos{s}\phi$ for any $w\in W_s$; in this case we write ${\mathcal M}\mos{s}\phi$. The formula $\phi$ is valid in a  frame $\mathcal F$  if $\phi$ is valid in all the models based on $\mathcal F$; in this case we write ${\mathcal F}\mos{s}\phi$. Finally, the formula $\phi$ is valid if $\phi$ is valid in all frames; in this case we write $\mos{s}\phi$. 
\end{definition}

The {\em set of theorems} of ${\mathcal K}_{ \Sigma}$ is the least set of formulas that contains all the axioms and it is closed under deduction rules. Note that the set of theorems is obviously closed under {\em $S$-sorted uniform substitution} (i.e. propositional variables of sort $s$ are uniformly replaced by formulas of the same sort). If $\phi$ is a theorem of sort $s$ write \mbox{$\vds{s}_{{\mathcal K}_{ \Sigma}}\phi$}, or simply \mbox{$\vds{s} \phi$.} Obviously, ${\mathcal K}_{ \Sigma}$ is a  generalization of the modal system $\mathbf K$ (see \cite{mod} for the mono-sorted version). The completeness theorem of ${\mathcal K}_{ \Sigma}$ is proved in \cite{noi}.  

\section{The many-sorted  basic hybrid  modal logic ${\mathcal H}_{ \Sigma}(@_z)$}\label{sec:basichyb}

        Let $(S,\Sigma)$ be a many-sorted signature. A basic hybrid modal logic is defined on top of modal logic $\mathcal{K}_{\Sigma}$ by  adding {\em nominals}, {\em states variables} and specific operators.  Nominals  allow us to directly refer the worlds (states) of a model, since they are evaluated in singletons in any model. However, a nominal may refer different worlds in different models. The sorts will be denoted by $s$, $t$, $\ldots$ and by ${\rm PROP}=\{{\rm PROP}_s\}_{s\in S}$, ${\rm NOM}=\{{\rm NOM}_s\}_{s\in S}$ and ${\rm SVAR}=\{{\rm SVAR}_s\}_{s\in S}$ we will denote some countable $S$-sorted sets. The elements of ${\rm PROP}$ are ordinary propositional variables and they will be denoted $p$, $q$,$\ldots$; the elements of ${\rm NOM}$ are called {\em nominals} and they will be denoted by $j$, $k$, $\ldots$; the elements of ${\rm SVAR}$ are called {\em state variables} and they are denoted $x$, $y$, $\ldots$.  We shall assume that for any distinct sorts $s\neq t\in S$, the corresponding sets of propositional variables, nominals and state variables are distinct. A {\em state symbol} is a nominal or a state variable.
        
Recall that the satisfaction in modal logic is local, i.e. one analyzes what happens in a given point of the model. With respect to this, nominals can be seen as local constants and, given a model (a frame and an evaluation), the value of a nominal is a fixed singleton set. State variables are variables that range over the individual points of a model, while the usual (propositional) variables range over arbitrary sets of points. 
        
For this section we drew our inspiration mainly from \cite{hybtemp}. As already announced,  in this section we extend the system defined in Section~\ref{sec:intro} by adding the satisfaction operators $@_z^s$ where $s\in S$ and $z$ is a {\em state symbol}. The formulas 
of  ${\mathcal H}_{ \Sigma}(@_z)$ are defined as follows:
\begin{center}
$\phi_s :=  p\mid j\mid x_s\mid \neg \phi_s \mid\phi_s \vee \phi_s \mid \sigma(\phi_{s_1}, \ldots, \phi_{s_n})_s
  \mid @_z^s\psi_t$
\end{center}

Here, $p\in {\rm PROP}_s$, $j\in {\rm NOM}_s$, $t\in S$, $x\in {\rm SVAR}_s$, $\sigma\in \Sigma_{s_1\cdots s_n,s}$, $z$ is a state symbol of sort $t$ and 
$\psi$ is a formula of sort $t$.

In order to define the semantics for ${\mathcal H}_{ \Sigma}(@_z)$ more is needed. Given an $(S, \Sigma)$-model ${\mathcal M}=(W, (R_\sigma)_{\sigma\in\Sigma}, V)$, an {\em assignment} is an $S$-sorted function \mbox{$g : {\rm SVAR} \rightarrow W$,} which evaluates states variables to singleton sets, and for any $s \in S$ we have $g_s: {\rm SVAR}_s \to W_s$.

The satisfaction relation is defined similar with the one in ${\mathcal K}_{ \Sigma}$, but we only need to add the definition for $@_z$: \begin{center}
$\mathcal{M},g,w \mos{s} @_z^s\phi$ if and only if $\mathcal{M},g,Den_g(z)\mos{t} \phi$ 
\end{center}
where $z$ is a state symbol of sort $t$ and $\phi$ is a formula of the same sort $t$. Here, $Den_g(z)$ is the denotation of the state symbol $z$ of sort $s$ in an $(S, \Sigma)$-model $\mathcal{M}$ with an assignment function $g$, where $Den_g(z)=V_s(z)$ if $z$ is a nominal, and $Den_g(z)=g_s(z)$ if $z$ is a state variable. 

Let us remark that if $z$ is a nominal, then the satisfaction relation is equivalent with the one in \cite{noi2}:\begin{center}
 $\mathcal{M},g,w \mos{s} @_z^s\phi$ if and only if $\mathcal{M},g,Den_g(z)\mos{t} \phi$ if and only if\\ $\mathcal{M}, g, v \mos{t} \phi$ where $Den_g(z)=V_t(z)=\{v\}$. 
\end{center}

One important remark is the definition of the satisfaction modalities: 
{\em if $z$ and $\phi$ are a state symbol and a formula both of the sort $t\in S$, then we define a family of satisfaction operators 
$\{@_z^s\phi\}_{s\in S}$} such that $@_z^s\phi$ is a formula of sort $s$ for any $s\in S$. This means that $\phi$ is true at the world denoted by $z$ on the sort $t$ and is acknowledged on any sort $s\in S$. For example, if we take $j$ and $k$ two nominals of sort $t$ and $s\neq t$ the formula $@^s_j\neg k$ expresses the fact that at any world of sort $s$ we know that the worlds of sort $t$ named by $j$ and $k$ are different. So, our sorted worlds are not isolated any more, both from a syntactic and a semantic point of view.

\begin{figure}[!h]
\centering

{\bf The system} ${\mathcal H}_{\Sigma}(@_z)$
\begin{itemize}
\item The axioms and the deduction rules of ${\mathcal K}_{\Sigma}$

\item Axiom schemes: any formula of the following form is an axiom, where $s,s',t$ are sorts,  $\sigma\in\Sigma_{s_1\cdots s_n,s}$,  $\phi,\psi, \phi_1,\ldots,\phi_n$ are formulas (when necessary, their sort is marked as a subscript), and $y$, $z$ are state symbols:

$\begin{array}{rl}
(K@) & @_z^{s} (\phi_t \to \psi_t) \to (@_z^s \phi \to @_z^s \psi) \\
(SelfDual) & @^s_z \phi_t \leftrightarrow \neg @_z^s \neg \phi_t \\
(Intro)  & z \to (\phi_s \leftrightarrow @_z^s \phi_s)\\
(Agree) &  @_y^{t}@_z^{t'} \phi_s \leftrightarrow @^t_z \phi_s\\
(Ref) & @_z^sz_t \\
(Back) & \sigma(\ldots,\phi_{i-1}, @_z^{s_i} {\psi}_t,\phi_{i+1},\ldots)_s\to @_z^s {\psi}_t 
\end{array}$\\
\medskip

%

\item Deduction rules:

\begin{tabular}{rl}
$(BroadcastS)$ & if $\vds{s}@_z^s\phi_t$ then $\vds{s'}@_z^{s'}\phi_t$ \\
 $(Gen@)$& if $\vds{s'} \phi$ then $\vds{s} @_z \phi$, where $z $ and $\phi$ have the same sort $s'$\\
$(Paste0)$ & if $\vds{s} @^s_z (y \wedge \phi) \to \psi$ then $\vds{s} @_z \phi \to \psi$\\ & where $z$ is distinct from $y$ that does not occur in $\phi$ or $\psi$\\
$(Paste1)$ & if $\vds{s} @^s_z \sigma(\ldots, y \wedge \phi,\ldots)  \to \psi$ then $\vds{s} @^s_z \sigma(\ldots, \phi, \ldots) \to \psi$\\ & where $z$ is distinct from $y$ that does not occur in $\phi$ or $\psi$\\
\end{tabular}
\end{itemize}
\caption{$(S,\Sigma)$ basic hybrid modal logic}\label{fig:unu}
\end{figure}

\begin{proposition}[Soundness] The deductive systems for ${\mathcal H}_{ \Sigma}(@_z)$ from Figure \ref{fig:unu} is sound.
\end{proposition}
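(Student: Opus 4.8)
The argument is the standard soundness bookkeeping: verify that every axiom of $\mathcal{H}_\Sigma(@_z)$ is valid (in every $(S,\Sigma)$-model, under every assignment, at every world of the relevant sort) and that every deduction rule preserves validity. For the axioms and rules imported from $\mathcal{K}_\Sigma$ nothing new is needed: those formulas contain neither state variables nor satisfaction operators, so their truth value at $\mathcal{M},g,w$ is independent of $g$, and validity here coincides with validity in the sense of \cite{noi}, where soundness of $\mathcal{K}_\Sigma$ is already proved.

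The clauses for the genuinely hybrid part all reduce to one elementary observation about the satisfaction clause for $@$: if $z$ is a state symbol of sort $t$ and $\psi$ a formula of sort $t$, then whether $\mathcal{M},g,w\mos{s}@_z^s\psi$ holds depends only on $\mathcal{M}$, $g$, $z$, $\psi$ --- neither on the sort $s$ nor on the evaluation point $w$ --- since by definition it is just $\mathcal{M},g,Den_g(z)\mos{t}\psi$, and $Den_g(z)$ is the single world named by $z$, fixed by $\mathcal{M},g$. With this ``placelessness'': $(K@)$ is closure of $\mos{t}$ under modus ponens at $Den_g(z)$; $(SelfDual)$ holds because $Den_g(z)$ is a single world, so $\mathcal{M},g,Den_g(z)\mos{t}\psi$ iff $\mathcal{M},g,Den_g(z)\mosn{t}\neg\psi$; $(Agree)$ is obtained by unwinding the clause twice, the inner $@$ being evaluated at $Den_g(z)$ regardless of $Den_g(y)$; $(Ref)$ holds since the world named by $z$ trivially satisfies $z$; $(Intro)$ holds because $\mathcal{M},g,w\mos{s}z$ forces $w$ to be the world named by $z$, making $\phi$ and $@_z^s\phi$ equivalent there; and $(Back)$ holds because if some $R_\sigma$-successor of $w$ satisfies $@_z^{s_i}\psi$ then $\mathcal{M},g,Den_g(z)\mos{t}\psi$, whence $\mathcal{M},g,w\mos{s}@_z^s\psi$ by placelessness. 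For $(Gen@)$ and $(BroadcastS)$ one uses in addition that each $W_s$ is nonempty: validity of $\phi$ on sort $s'$ gives $\mathcal{M},g,Den_g(z)\mos{s'}\phi$ for all $\mathcal{M},g$, hence $\mos{s}@_z\phi$; and validity of $@_z^s\phi$ on sort $s$, read off at any chosen $w\in W_s$, gives $\mathcal{M},g,Den_g(z)\mos{t}\phi$ for all $\mathcal{M},g$, hence validity of $@_z^{s'}\phi$ on $s'$.

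The only rules needing real work are the \emph{paste} rules, which I would handle by contraposition plus a coincidence lemma: if two models (resp. assignments) agree on the denotation of every state symbol and on the value of every propositional variable occurring in a formula $\chi$, then they agree on the truth of $\chi$ at every world. For $(Paste0)$, suppose $@_z\phi\to\psi$ is refuted at some $\mathcal{M},g,w$ of sort $s$, so $\mathcal{M},g,Den_g(z)\mos{t}\phi$ while $\mathcal{M},g,w\mosn{s}\psi$. Since $y$ is a state symbol of sort $t$, distinct from $z$, not occurring in $\phi$ or $\psi$, redefine the model/assignment only at $y$ so that $y$ denotes $Den_g(z)$ (changing $V$ if $y$ is a nominal, changing $g$ if $y$ is a state variable); by the coincidence lemma this alters neither the truth of $\phi$, $\psi$ nor the denotation of $z$, so in the new structure $Den(z)$ satisfies $y\wedge\phi$ and $w$ still refutes $\psi$, i.e.\ $@_z^s(y\wedge\phi)\to\psi$ is refuted there, contradicting its assumed validity. $(Paste1)$ is the same, except that the world to be named by $y$ is a witnessing $R_\sigma$-successor of $Den_g(z)$ for the $i$-th argument position (of the matching sort), with $y$ taken fresh with respect to $z$, $\phi$, $\psi$ and the remaining arguments of $\sigma$. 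The main obstacle is exactly to set up this renaming cleanly: state the coincidence lemma, treat uniformly the nominal and the state-variable cases for $y$, and check that naming the witness disturbs neither $z$ nor any formula in the sequent; everything else is routine unwinding of the $@$-clause.
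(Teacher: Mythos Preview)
Your proof is correct and, for the hybrid axioms and for $(Gen@)$ and $(BroadcastS)$, proceeds exactly as the paper does: unwind the satisfaction clause for $@$ at an arbitrary $\mathcal{M},g,w$. Your ``placelessness'' remark is just a tidy packaging of what the paper verifies case by case.

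The one point of divergence is the Paste rules. The paper argues at a single fixed $\mathcal{M},g,w$, attempting to pass from $\mathcal{M},g,w\mos{s}@_z(y\wedge\phi)\to\psi$ to $\mathcal{M},g,w\mos{s}@_z\phi\to\psi$ by propositional manipulation alone, without explicitly invoking the freshness of $y$; as written that step does not go through (from $\neg A\vee\neg B\vee C$ one cannot extract $\neg B\vee C$ without further information about $A$). Your contraposition-plus-coincidence argument --- refute the conclusion at some $\mathcal{M},g,w$, then reassign $y$ (in $V$ or in $g$, according to whether $y$ is a nominal or a state variable) to the required witness world, and use freshness of $y$ with respect to $z,\phi,\psi$ to transport the refutation to the premise --- is the standard rigorous treatment of such name-introduction rules and makes explicit the renaming that the paper's argument leaves implicit. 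So this is not really a different route, but a more careful execution of the intended one.
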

\begin{proof}
Let $\mathcal{M}$ be an arbitrary model and $w$ any state of sort $s$.

$(K_@)$
Suppose $\mathcal{M},g,w \mos{s} @_z^{s} (\phi_t \to \psi_t) $ if and only if $\mathcal{M},g,Den_g(z) \mos{t} \phi_t \to \psi_t$ if and only if $\mathcal{M},g, Den_g(z) \mos{t} \phi_t$ implies $\mathcal{M},g, Den_g(z) \mos{t} \psi_t$. Let us prove the non-trivial case: suppose that $\mathcal{M},g, w  \mos{s} @_j^s \phi_t$. Then $\mathcal{M},g, Den_g(z)  \mos{t} \phi_t$, but this implies that $\mathcal{M},g, Den_g(z) \mos{t} \psi_t$ if and only if $\mathcal{M},g, w  \mos{s} @_z^s \psi_t$. Therefore, $\mathcal{M},g, w \mos{s} @_z^s \phi_t \to @_z^s \psi_t$.

$(Agree)$
Suppose $\mathcal{M},g,w  \mos{t} @_y^{t}@_z^{t'} \phi_s $ if and only if $\mathcal{M},g,Den_g(y)  \mos{t'} @_z^{t} \phi_s $ implies $\mathcal{M},g,Den_g(z)  \mos{s}  \phi_s $. It follows that $\mathcal{M},g,w  \mos{t}  @^{t}_z \phi_s $.

$(SelfDual)$ 
Suppose $\mathcal{M},g,w  \mos{s} \neg @^s_z  \neg \phi_t $ if and only if $\mathcal{M},g,w  \mosn{s} @^s_z \neg  \phi_t$ if and only if  $\mathcal{M},g,Den_g(z)  \mosn{t}   \neg \phi_t$ if and only if $ \mathcal{M},g, Den_g(z) \mos{t} \phi_t$ if and only if $\mathcal{M},g,w  \mos{s}  @^s_z  \phi_t$.

$(Back)$
Suppose $\mathcal{M},g,w \mos{s}  \sigma(\ldots,\phi_{i-1}, @_z^{s_i} {\psi}_t,\phi_{i+1},\ldots)_s$ if and only if there is $(w_1,\ldots,w_n) \in W_{s_1}\times\cdots\times W_{s_n}$ such that  $R_{\sigma} ww_1\ldots w_n$ and $\mathcal{M},g,w_i  \mos{s_i} \phi_i$ for any $i \in [n]$. This implies that there is $w_i \in W_{s_i}$ such that  $\mathcal{M},g,w_i  \mos{s_i} @_z^{s_i} {\psi}_t $, so $ \mathcal{M},g,Den_g(z)  \mos{t} \psi_t $. Hence, $\mathcal{M},g,w \mos{s}  @_z^{s} {\psi}_t $

$(Ref)$ 
Suppose $\mathcal{M},g, w \mosn{s}@_z^s z_t$. Then $\mathcal{M},g, Den_g(z) \mosn{t} z$, contradiction.

$(Intro)$
Suppose $\mathcal{M},g,w \mos{s} z$  and  $\mathcal{M},g,w \mos{s}  \phi_s$. Then ${w}= Den_g(z)$, so we get that $\mathcal{M},g,Den_g(z) \mos{s} z$  and  $\mathcal{M},g,Den_g(z) \mos{s}  \phi_s$ implies that $\mathcal{M},g,w \mos{s} @_z^s \phi_s$.

Now, suppose $\mathcal{M},g,w \mos{s} z$ and $\mathcal{M},g,w \mos{s} @_z^s \phi_s$. Because from the first assumption we have $Den_g(z)=\{ w\}$, then, from the second one, we can conclude that $\mathcal{M},g,w\mos{s} \phi_s $.
\end{proof}

The following lemma generalizes the results from \cite{pureax}, being essentially used in the proof of the completeness theorem.

\begin{lemma}\label{lemma:bridge}
The following formulas are theorems:
\vspace*{3mm}

\begin{tabular}{ll}
$(Nom_z)$ & $@_z^s y_t\to (@_z^s\phi_t\leftrightarrow @_y^s\phi_t)$\\
& for any $s,t\in S$,  $z_t,y_t$ state symbols of sort $t$ and $\phi_t$ a formula \\
& of sort $t$. \\ 
$(Sym)$ &  $@_z^s y_t \to @_y^s z_t$ \\
  & where $s,t\in S$ and $z_t,y_t$ are state symbols of sort $t$,\\
$(Bridge)$ & $\sigma(\ldots \phi_{i_1}, z_{s_i} ,\phi_{i+1} \ldots) \wedge @_z^s \phi_{s_i} \to \sigma(\ldots\phi_{i-1}, \phi_{s_i} ,\phi_{i+1}, \ldots)$ \\
 & if $\sigma\in \Sigma_{s_1\ldots s_n,s}$, 
 $z_{s_i}$ is a state symbol of sort $s_i$ and $\phi_{s_i}$ is a\\
 & formula of sort $s_i$.
 \end{tabular}
%
%
\end{lemma}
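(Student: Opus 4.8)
The three formulas are all Hilbert-style theorems, so the strategy is to derive each one syntactically from the axioms of $\mathcal{H}_\Sigma(@_z)$ (Figure~\ref{fig:unu}) together with propositional reasoning, $(MP)$, $(UG)$, $(Gen@)$, and the broadcast/paste rules, mimicking the mono-sorted derivations in \cite{pureax} but keeping track of sorts. I would prove them in the order $(Nom_z)$, then $(Sym)$ (which follows quickly from $(Nom_z)$), then $(Bridge)$, since $(Bridge)$ is the one that will actually feed into the completeness proof and is the most delicate.

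\textbf{Deriving $(Nom_z)$.} The key idea is that $(Agree)$ lets us ``move'' the evaluation point. From $(Agree)$ we have $@_z^s y_t \leftrightarrow @_w^{s'}@_z^{s'} y_t$-style equivalences; more directly, I would argue: assume $@_z^s y_t$. Using $(K@)$ and $(SelfDual)$, the operator $@_z^s$ behaves like a normal (indeed, self-dual) modality, so it distributes over $\leftrightarrow$. The plan is to show $@_z^s\phi_t \leftrightarrow @_z^s(y_t \wedge \phi_t)$ under the hypothesis $@_z^s y_t$ (propositional logic inside the $@$, via $(K@)$), and then $@_z^s(y_t\wedge\phi_t) \to @_y^s\phi_t$ using $(Intro)$: inside the point denoted by $z$, if $y$ holds there then $\phi_t \leftrightarrow @_y^t\phi_t$, and then $(Agree)$ collapses $@_z^s@_y^t\phi_t$ to $@_y^s\phi_t$. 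Running the symmetric argument gives the converse implication. The bookkeeping obstacle here is making sure every application of $(Intro)$ and $(Agree)$ is at the right sort and that the ``nestings'' $@^s_z@^t_y$ are legal formulas; $(BroadcastS)$ may be needed to realign sorts.

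\textbf{Deriving $(Sym)$ and $(Bridge)$.} For $(Sym)$: instantiate $(Nom_z)$ with $\phi_t := z_t$, obtaining $@_z^s y_t \to (@_z^s z_t \leftrightarrow @_y^s z_t)$; since $(Ref)$ gives $@_z^s z_t$ as a theorem, $(MP)$ yields $@_z^s y_t \to @_y^s z_t$. For $(Bridge)$, the plan is to combine $(Back)$ and the $(Intro)$-flavoured reasoning. Assume the antecedent $\sigma(\ldots,z_{s_i},\ldots) \wedge @_z^s\phi_{s_i}$. The second conjunct, by $(Intro)$ read contrapositively together with $(Ref)$, tells us that at the point named $z$ the formula $\phi_{s_i}$ holds; one then wants to ``push'' this back through $\sigma$: from $z_{s_i}$ in the $i$-th argument and the fact that $z\to(\phi_{s_i}\leftrightarrow @_z^{s_i}\phi_{s_i})$, replace $z_{s_i}$ by $z_{s_i}\wedge\phi_{s_i}$ inside $\sigma$, then weaken to $\sigma(\ldots,\phi_{s_i},\ldots)$. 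Making the substitution step rigorous is exactly where $(Paste1)$ or the monotonicity of $\sigma$ (derivable from $(K_\sigma)$, $(UG)$, $(Dual_\sigma)$ in $\mathcal{K}_\Sigma$) is used, possibly after prefixing everything with an $@_y^s$ for a fresh $y$ so that $(Paste1)$ applies and then discharging $y$.

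\textbf{Main obstacle.} The genuinely tricky step is $(Bridge)$: unlike $(Nom_z)$ and $(Sym)$, which are ``one-point'' manipulations handled by $(K@)$, $(Intro)$, $(Agree)$, $(Ref)$, the $(Bridge)$ derivation has to move a fact established \emph{at} the world $z$ \emph{into the scope of} the polyadic operator $\sigma$ evaluated at a different world $w$. This requires interleaving the modal reasoning of $\mathcal{K}_\Sigma$ (monotonicity of $\sigma$ in each argument) with the hybrid machinery, and getting the fresh-variable side conditions of $(Paste1)$ right; I expect this to be the bulk of the work, while the other two parts are short once $(Nom_z)$ is in hand.
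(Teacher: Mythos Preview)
Your plans for $(Nom_z)$ and $(Sym)$ are fine. For $(Nom_z)$ you are doing exactly what the paper does: start from the $(Intro)$ instance $y_t\to(\phi_t\leftrightarrow @_y^t\phi_t)$, prefix with $@_z^s$ via $(Gen@)$, distribute with $(K@)$, and collapse $@_z^s@_y^t$ to $@_y^s$ via $(Agree)$; no sort-realignment via $(BroadcastS)$ is needed. Your derivation of $(Sym)$ by instantiating $(Nom_z)$ with $\phi_t:=z_t$ and discharging $@_z^s z_t$ via $(Ref)$ is in fact cleaner than the paper's direct propositional argument, and is the standard route.

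For $(Bridge)$ there is a real gap. You list $(Back)$ among the ingredients but never say how it is used, and your fallback suggestion of $(Paste1)$ is a wrong turn: $(Paste1)$ is a rule that from a theorem $@_z^s\sigma(\ldots,y\wedge\phi,\ldots)\to\psi$ with $y$ fresh produces $@_z^s\sigma(\ldots,\phi,\ldots)\to\psi$; it \emph{removes} a named witness from the antecedent of an implication and cannot be used to conjoin $\phi$ to $z$ inside $\sigma$. The paper's key idea, which your plan is missing, is to argue contrapositively. Rewriting the goal as
\[
\sigma(\ldots,z_{s_i},\ldots)\wedge\sigma^{\mb}(\ldots,\neg\phi_{i-1},\neg\phi_{s_i},\neg\phi_{i+1},\ldots)\;\to\; @_z^s\neg\phi_{s_i},
\]
one first uses the ${\mathcal K}_{\Sigma}$-theorem $\sigma(\ldots,z_{s_i},\ldots)\wedge\sigma^{\mb}(\ldots,\neg\phi_{s_i},\ldots)\to\sigma(\ldots,z_{s_i}\wedge\neg\phi_{s_i},\ldots)$ (the polyadic analogue of $\Diamond\alpha\wedge\Box\beta\to\Diamond(\alpha\wedge\beta)$: the matching $\phi_j/\neg\phi_j$ pairs in the other coordinates force the $i$-th disjunct). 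Then $(Intro)$ gives $z_{s_i}\wedge\neg\phi_{s_i}\to @_z^{s_i}\neg\phi_{s_i}$, monotonicity of $\sigma$ lifts this under $\sigma$, and now $(Back)$ \emph{extracts} $@_z^s\neg\phi_{s_i}$ from under $\sigma$. A final application of $(Dual_\sigma)$ and $(SelfDual)$ returns the stated form. So the crucial move is not to push $\phi$ \emph{into} $\sigma$, but to pull $@_z\neg\phi$ \emph{out} via $(Back)$; neither $(Paste1)$ nor a fresh $@_y$-prefix plays any role.
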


\begin{proof}
In the sequel, by PL we mean classical propositional logic and by ML we mean the basic modal logic.
\medskip

$(Nom_z)$ 
 
\noindent$(1)$ $ \vds{t} y_t \to (\phi_t \leftrightarrow @_y^t \phi_t) $\hfill $(Intro)$

\noindent$(2)$ $ \vds{s} @_z^s (y_t \to (\phi_t \leftrightarrow @_y^t \phi_t))$ \hfill  $(Gen@)$

\noindent$(3)$ $ \vds{s} @_z^s (y_t \to (\phi_t \leftrightarrow @_y^t \phi_t)) \to (  @_z^s y_t \to  @_z^s (\phi_t \leftrightarrow @_y^t \phi_t)) $ \hfill  $(K@)$

\noindent$(4)$  $\vds{s}  @_z^s y_t \to  @_z^s (\phi_t \leftrightarrow @_y^t \phi_t) $ \hfill  $(MP):(2),(3)$

\noindent$(5)$  $\vds{s}  @_z^s (\phi_t \leftrightarrow @_y^t \phi_t) \leftrightarrow (@_z^s \phi_t \leftrightarrow @_z^s @_y^t \phi_t)$  \hfill ML

\noindent$(6)$  $\vds{s}  @_z^s y_t \to (@_z^s \phi_t \leftrightarrow @_z^s @_y^t \phi_t)$\hfill PL:$(4),(5)$

\noindent$(7)$  $\vds{s}  @_z^s @_y^t \phi_t  \leftrightarrow  @_y^s \phi_t$ \hfill $(Agree)$

\noindent$(8)$  $\vds{s}  @_z^s y_t \to (@_z^s \phi_t \leftrightarrow @_y^s \phi_t)$\hfill PL:$(6),(7)$

\bigskip

$(Sym)$

\noindent$(1)$ $ \vds{s} @_y^s z_t \wedge @_z^s y_t\to @_z^s y_t$ \hfill $Taut$\\
\noindent$(2)$ $ \vds{s} (@_y^s z_t \wedge @_z^s y_t \to @_z^s y_t) \to (@_y^s z \to( @_z^s y_t\to @_z^s y_t))$ \hfill $Taut$\\
\noindent$(3)$ $ \vds{s} @_y^s z \to( @_z^s y_t \to @_z^s y_t)$ \hfill $(MP):(1),(2)$\\
\noindent$(4)$ $ \vds{s}( @_z^s y_t \to @_z^s y_t)\to @_z^s y_t$ \hfill  PL\\
\noindent$(5)$ $ \vds{s} @_y^s z \to @_z^s y_t$  \hfill PL\\
\noindent$(6)$ $ \vds{s} @_z^s y_t \to @_y^s z$  \hfill Analogue\\
\noindent$(7)$ $ \vds{s} @_z^s y_t \leftrightarrow @_y^s z$\hfill PL:$(5),(6)$
 
   \bigskip

$(Bridge)$ 

\noindent$(1)~$  $\vds{s} \sigma(\ldots \phi_{i-1}, z_{s_i} ,\phi_{i+1} \ldots) \wedge \sigma^{\mb}(\ldots,\neg\phi_{i-1}, \neg \phi_{s_i} ,\neg\phi_{i+1}, \ldots)\to \hfill $

\noindent\hfill$ \to \sigma (\ldots\phi_{i-1},z_{s_i} \wedge \phi_{s_i} ,\phi_{i+1}, \ldots)$ $\ \  $ ~ML

\noindent$(2)~$  $\vds{s_i}z_{s_i} \wedge \neg \phi_{s_i} \to @_z^{s_i} \neg \phi_{s_i} $ \hfill $(Intro)$

\noindent$(3)~$  $\vds{s}\sigma (\ldots\phi_{i-1},z_{s_i} \wedge \neg \phi_{s_i} ,\phi_{i+1}, \ldots) \to \sigma (\ldots\phi_{i-1}, @_z^{s_i} \neg \phi_{s_i} ,\phi_{i+1}, \ldots)$ \hfill ML

\noindent$(4)~$  $\vds{s} \sigma (\ldots\phi_{i-1}, @_z^{s_i} \neg \phi_{s_i} ,\phi_{i+1}, \ldots) \to @_z^s \neg \phi_{s_i}$ \hfill  $(Back)$

\noindent$(5)~$  $\vds{s}\sigma (\ldots\phi_{i-1},z_{s_i} \wedge \neg \phi_{s_i} ,\phi_{i+1}, \ldots) \to @_z^s \neg \phi_{s_i}$ \hfill PL:(3),(4)

\noindent$(6)~$  $ \vds{s} \sigma(\ldots \phi_{i_1}, z_{s_i},\phi_{i+1} \ldots) \wedge \sigma^{\mb}(\ldots,\neg\phi_{i-1}, \neg \phi_{s_i} ,\neg\phi_{i+1}, \ldots)\to  @_z^s  \neg \phi_{s_i}$

$\, $ \hfill PL:(1),(5)

\newpage
\noindent$(7)~$   $ \vds{s} \sigma(\ldots \phi_{i_1}, z_{s_i},\phi_{i+1} \ldots) \to( \sigma^{\mb}(\ldots,\neg\phi_{i-1}, \neg \phi_{s_i} ,\neg\phi_{i+1}, \ldots)\to  @_z^s  \neg \phi_{s_i})$ 

$ \, $ \hfill PL

\noindent$(8)~$   $ \vds{s} \sigma(\ldots \phi_{i_1}, z_{s_i},\phi_{i+1} \ldots) \to( \neg  @_z^s  \neg \phi_{s_i} \to \neg \sigma^{\mb}(\ldots,\neg\phi_{i-1}, \neg \phi_{s_i} ,\neg\phi_{i+1}, \ldots))$ 

$\, $ \hfill PL

\noindent$(9)~ $  $ \vds{s} \sigma(\ldots \phi_{i_1}, z_{s_i},\phi_{i+1} \ldots) \to(   @_z^s   \phi_{s_i} \to \sigma(\ldots,\phi_{i-1},  \phi_{s_i} ,\phi_{i+1}, \ldots))$ 

$ \ $ \hfill $(Dual),(SelfDual)$

\noindent$(10)$  $ \vds{s} \sigma(\ldots \phi_{i_1}, z_{s_i},\phi_{i+1} \ldots) \wedge   @_z^s   \phi_{s_i} \to \sigma(\ldots,\phi_{i-1},  \phi_{s_i} ,\phi_{i+1}, \ldots)$ 
\hfill PL

\end{proof}

\begin{lemma}\label{lem:prop}

Let $\Gamma_s$ be a maximal consistent set that contains a state symbol of sort $s$, and for all state symbols $z$, let $\Delta_z=\{ \phi \mid @_z^s \phi \in \Gamma_s \}$. Then:
\begin{itemize}
\item[1)] For every state symbol $z$ of sort $s$, $\Delta_z$ is a maximal consistent set that contains $z$.
\item[2)] For all state symbols $z$ and $y$ of same sort, $@^s_z \phi \in \Delta_y$ if and only if $@^s_z \phi \in \Gamma_s$.
\item[3)] There is a state symbol $z$ such that $\Gamma_s = \Delta_z$.
\item[4)] For all state symbols $z$ and $y$ of same sort, if $z \in \Delta_y$ then $\Delta_z= \Delta_y$.
\end{itemize}
\end{lemma}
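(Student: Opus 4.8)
The plan is to prove the four items in the order given, since each relies on the previous ones, and to work throughout inside the fixed maximal consistent set $\Gamma_s$ with its chosen state symbol, say $k\in\Gamma_s$ (the hypothesis guarantees such a $k$ exists; of sort $s$ by consistency and the disjointness of the sorted sets).

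\textbf{Item 1).} Fix a state symbol $z$ of sort $s$. First I would show $\Delta_z$ is consistent: if $\Delta_z\vdash_s\bot$, then finitely many $\phi_1,\dots,\phi_m\in\Delta_z$ give $\vds{s}\phi_1\wedge\cdots\wedge\phi_m\to\bot$, hence by $(Gen@)$, $(K@)$ and $(SelfDual)$ (i.e. the normality of $@_z^s$, which is already used as ``ML'' above) we get $\vds{s}@_z^s\phi_1\wedge\cdots\wedge@_z^s\phi_m\to@_z^s\bot$; but $@_z^s\phi_i\in\Gamma_s$ for each $i$ and $\Gamma_s$ is consistent, so $@_z^s\bot\in\Gamma_s$, contradicting consistency (note $\vds{s}\neg@_z^s\bot$ follows from $(SelfDual)$ and $(Ref)$-style reasoning, or directly $@_z^s\top$ is derivable and $@_z^s\neg\top\leftrightarrow\neg@_z^s\top$). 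Maximality: for any formula $\phi$ of sort $s$, by $(SelfDual)$ we have $\vds{s}@_z^s\phi\vee@_z^s\neg\phi$, so one of $@_z^s\phi$, $@_z^s\neg\phi$ lies in the maximal consistent $\Gamma_s$, i.e. $\phi\in\Delta_z$ or $\neg\phi\in\Delta_z$; combined with consistency this gives maximality. Finally $z\in\Delta_z$ because $@_z^sz\in\Gamma_s$ by axiom $(Ref)$.

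\textbf{Item 2).} Fix $z,y$ of a common sort $t$. I would argue: $@_z^s\phi\in\Delta_y$ iff $@_y^s@_z^s\phi\in\Gamma_s$ (by definition of $\Delta_y$), and by axiom $(Agree)$, $\vds{s}@_y^s@_z^s\phi\leftrightarrow@_z^s\phi$, so since $\Gamma_s$ is closed under theorems and modus ponens, $@_y^s@_z^s\phi\in\Gamma_s$ iff $@_z^s\phi\in\Gamma_s$. This is the routine ``Agree'' bookkeeping.

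\textbf{Item 3).} Here I would use the chosen $k\in\Gamma_s$ and claim $\Gamma_s=\Delta_k$. For $\supseteq$: if $\phi\in\Delta_k$ then $@_k^s\phi\in\Gamma_s$; since $k\in\Gamma_s$ and $(Intro)$ gives $\vds{s}k\to(\phi\leftrightarrow@_k^s\phi)$, two applications of modus ponens yield $\phi\in\Gamma_s$. For $\subseteq$: symmetrically, $\phi\in\Gamma_s$ and $k\in\Gamma_s$ and $(Intro)$ give $@_k^s\phi\in\Gamma_s$, i.e. $\phi\in\Delta_k$. \textbf{Item 4).} Suppose $z\in\Delta_y$ with $z,y$ of sort $t$, i.e. $@_y^sz\in\Gamma_s$. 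By item 2) this says $@_y^sz\in\Delta_w$ for every state symbol $w$ of sort $s$; applying this with any $w$ and using $(Nom_z)$ from Lemma~\ref{lemma:bridge} — $\vds{s}@_z^sy\to(@_z^s\phi\leftrightarrow@_y^s\phi)$ together with $(Sym)$ giving $@_y^sz\to@_z^sy$ — I would conclude that for every sort $u$ and every formula $\phi$ of sort $u$ (using $(BroadcastS)$ to move between sorts), $@_z^u\phi\in\Gamma_s$ iff $@_y^u\phi\in\Gamma_s$; restricting to $u=s$ this is exactly $\Delta_z=\Delta_y$.

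The main obstacle I anticipate is item 4), specifically handling formulas of arbitrary sort $u$ in $\Delta_z$ versus $\Delta_y$ when $\Delta_z,\Delta_y$ are by definition sets of formulas of sort $s$ — one must be careful that $\Delta_z=\{\phi\mid@_z^s\phi\in\Gamma_s\}$ only contains sort-$s$ formulas, so the equality $\Delta_z=\Delta_y$ only needs $@_z^s\phi\in\Gamma_s\iff@_y^s\phi\in\Gamma_s$ for sort-$s$ formulas $\phi$, which follows directly from $(Sym)$ and $(Nom_z)$ without any broadcasting. So the broadcasting remark is a red herring and item 4) reduces to a clean two-line argument; the real care is just in item 1), making sure the normality facts about $@_z^s$ invoked as ``ML'' are genuinely derivable from $(K@)$, $(SelfDual)$, $(Gen@)$.
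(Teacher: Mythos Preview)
Your proposal is correct and follows essentially the same route as the paper: $(Ref)$, $(SelfDual)$ and normality of $@_z^s$ for item~1; $(Agree)$ for item~2; $(Intro)$ with the named element for item~3; and $(Sym)$ together with $(Nom_z)$ for item~4, exactly as the paper does. One small slip in your item~4 discussion: $\Delta_z$ consists of formulas of the sort of $z$, not of sort $s$ (the superscript $s$ in $@_z^s$ is the sort of the ambient formula in $\Gamma_s$), but since $(Nom_z)$ is stated for arbitrary sorts this does not affect your argument, and your self-correction that broadcasting is unnecessary is right.
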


\begin{proof}

\begin{itemize}
\item[1)] Recall that for any state symbol $z$ we have the $(Ref)$ axiom, so $@_z^s z_t \in \Gamma_s$. Hence, $z \in \Delta_z$. But, is $\Delta_z$ a consistent set? Let us suppose that is not. So there are $\chi_1, \ldots, \chi_n \in \Delta_j$ such that $\cancel{\vds{t}} \chi_1 \wedge \cdots \wedge\chi_n $, then ${\vds{t}} \neg( \chi_1 \wedge \cdots \wedge\chi_n) $. By use of $(Gen@)$ rule we get ${\vds{s}} @_z^s\neg( \chi_1 \wedge \cdots \wedge\chi_n) $, so $@_z^s\neg( \chi_1 \wedge \cdots \wedge\chi_n) \in \Gamma_s$. By $(SelfDual)$ axiom, we get that $\neg @_z^s( \chi_1 \wedge \cdots \wedge\chi_n) \in \Gamma_s$. But on the other hand, if $\chi_1, \ldots, \chi_n \in \Delta_j$, then by definition of $\Delta_j$ we have that $@^s_z \chi_1 , \ldots, @^s_z \chi_n \in \Gamma_s$, and because $@^s_z$ is a normal modality, then $@^s_z (\chi_1 \wedge \cdots \wedge\chi_n) \in \Gamma_s $ as well. But this contradicts the consistency of $\Gamma_s$. Therefore $\Delta_z$ is consistent. 

Now, let us check if $\Delta_z$ is maximal. Assume it is not. Then there is a formula $\chi$ of sort t such that $\chi \not \in \Delta_z$ and $\neg \chi \not \in \Delta_z$. But then $@_z^s \chi \not \in \Gamma_s$ and $@_z^s \neg \chi \not \in \Gamma_s$. But also $\Gamma_s $ is a maximal consistent set, then $\neg @_z^s \chi \in \Gamma_s$ and $\neg @_z^s \neg \chi \in \Gamma_s$. On the other hand, if $\neg @_z^s  \neg \chi \in \Gamma_s$, then by $(SelfDual)$ axiom we get that $@_z^s  \chi\in \Gamma_s$, and this contradicts the consistency of $\Gamma_s$. Hence, we conclude that $\Delta_z$ is a maximal consistent set.

\item[2)] By definition of $\Delta_y$, $@_z^t \phi \in \Delta_y$ if and only if $@_y^s @_z^t \phi \in \Gamma_s$. By $(Agree)$ axiom we have that  $@_y^s @_z^t \phi \in \Gamma_s$ if and only if  $ @_z^s \phi \in \Gamma_s$. This is called the $@$-agreement property, which it plays an important role in the completeness proof.

\item[3)] Let the state symbol $z$ of sort $s$ be contained in $\Gamma_s$. Suppose $\phi \in \Gamma_s$. Because $z \in \Gamma_s$, by $(Intro)$ axiom we get $@_z^s \phi \in \Gamma_s$, and by definition of $\Delta_z$, we have $\phi \in \Delta_z$. Conversely, if $\phi \in \Delta_z$, then by definition of $\Delta_z$ it follows that $@_z^s \phi \in \Gamma_s$. Moreover, $z \in \Gamma_s$ and using again the same axiom we get that $ \phi \in \Gamma_s $.

\item [4)] Let $z \in \Delta_y$, then by definition of $\Delta_y$ we have that $@^s_y z \in \Gamma_s$ and by $(Sym)$ we get that $@^s_z y \in \Gamma_s$. Firstly, let us prove that $\Delta_y \subseteq \Delta_z$. Let $\phi \in \Delta_y$, then by definition of $\Delta_y$ we have that $@^s_y \phi \in \Gamma_s$. Also,  $@^s_z y \in \Gamma_s$, so by $(Nom_z)$ it follows that $@^s_z \phi \in \Gamma_s$ and hence that $\phi \in \Delta_z$. Secondly, a similarly $(Nom_z)$-based proof shows that $\Delta_z \subseteq \Delta_y$. 
\end{itemize}
\end{proof}

This Lemma gives us the maximal consistent sets needed in the Existence Lemma. We build our models out of named sets, i.e. sets containing nominals. But more is needed in order for our model to support an Existential Lemma. Therefore, we add the $Paste$ rules, as you can see in Figure \ref{fig:unu}. In this setting, the system is still sound as we prove in the following:

$(BroadcastS)$ 
Suppose $\mathcal{M},g,w \mos{s} @_z^s  \phi_t$  if and only if $\mathcal{M},g,Den_g(z) \mos{t} \phi_t$. Hence, for any $s' \in S$ we have $\mathcal{M},g,w \mos{s'} @_z^{s'}  \phi_t$.

Now, let $\mathcal{M}$ be an arbitrary named model.

%
$(Paste0)$ Suppose $\mathcal{M},g,w \mos{s} @_z^s(y \wedge \phi) \to \psi$ if and only if $\mathcal{M},g,w \mos{s} @_z^s(y \wedge \phi)$ implies $\mathcal{M},g,w \mos{s} \psi$. Hence, ($\mathcal{M},g,v\mos{s'} y \wedge \phi$ where $Den_g(z)=\{v \}$ implies  $\mathcal{M},g,w \mos{s} \psi$) if and only if ($\mathcal{M},g,v \not\!\!\mos{s'} y$ and $\mathcal{M},g,v\not\!\!\mos{s'} \phi$, where $Den_g(z)=\{v \}$, or  $\mathcal{M},g,w \mos{s} \psi$). It follows that ($\mathcal{M},g,v \not\!\!\mos{s'} y$ or  $\mathcal{M},g,w \mos{s} \psi$) and ($\mathcal{M},g,v\not\!\!\mos{s'} \phi$ or  $\mathcal{M},g,w \mos{s} \psi$), where $Den_g(z)=\{v \}$. Then, ($\mathcal{M},g,v\not\!\!\mos{s'} \phi$ or  $\mathcal{M},g,w \mos{s} \psi$), where $Den_g(z)=\{v \}$. So, $\mathcal{M}, g, w \mos{s} @^s_z \phi \to \psi$.

$(Paste1)$ Suppose $\mathcal{M},g,w \mos{s} @_z^s  \sigma(\psi_1, \ldots, \psi_{i-1}, y \wedge \phi, \psi_{i+1}, \ldots, \psi_n) \to \psi$ if and only if $\mathcal{M},g,w \mos{s} @_z^s  \sigma(\psi_1, \ldots,$ $\psi_{i-1},  y \wedge \phi, \psi_{i+1}, \ldots, \psi_n)$ implies $\mathcal{M},g,w \mos{s} \psi$. Hence,  $\mathcal{M},g,v\mos{s'}   y \wedge \phi$ where $Den_g(z)=\{v \}$ if and only if exists $(v_1, \ldots,v_n) \in W_{s_1}\times \ldots\times W_{s_n}$ such that $R_{\sigma}v v_1 \ldots v_i \ldots v_n$ where $Den_g(z)=\{v \}$ and $\mathcal{M},g,v_e \mos{s'}  \psi_e$ for any $e \in [n], e\neq i$ and $  \mathcal{M}, g, v_i \mos{s_i} y \wedge \phi$. Hence,  $  \mathcal{M}, g, v_i \mos{s_i} y $ and  $  \mathcal{M}, g, v_i \mos{s_i}  \phi$, so $Den_g(y)=\{v_i\}$ and  $  \mathcal{M}, g, v_i \mos{s_i} \phi$. Then, if there exists $(v_1, \ldots,v_n) \in W_{s_1}\times \ldots\times W_{s_n}$ such that $R_{\sigma}v v_1 \ldots v_i \ldots v_n$ where $Den_g(z)=\{v \}$ and  $\mathcal{M},g,v_e \mos{s'}  \psi_e$ for any $e \in [n], e\neq i$ and $  \mathcal{M}, g, v_i \mos{s_i} \phi$, these imply $\mathcal{M},g,w \mos{s} \psi$. So, $\mathcal{M},g,v \mos{s'}  \sigma(\psi_1, \ldots, \psi_{i-1}, \phi, \psi_{i+1}, \ldots, \psi_n) $ where $ Den_g(z)=\{v \}$ implies $\mathcal{M},g,w \mos{s} \psi$. In conclusion, $\mathcal{M},g,w \mos{s'}  @_z^s\sigma(\psi_1, \ldots, \psi_{i-1}, \phi, \psi_{i+1}, \ldots, \psi_n) \to \psi$.

\begin{definition}[Named and pasted]
Let $s\in S$ and $\Gamma_s$ be a set of formulas of sort $s$ from 
 ${\mathcal H}_{ \Sigma}(@_z)$. We say that 
 \vspace*{-0.2cm}
 \begin{itemize}
 \item $\Gamma_s$ is {\sf named} if one of its elements is a nominal,
 \item $\Gamma_s$ is {\sf pasted} if it is both {\sf 0-pasted} and {\sf 1-pasted}:
 \begin{itemize}
 \item[{\rm (-)}] $\Gamma_s$ is {\sf 0-pasted} if, for any $t\in S$,  $\sigma\in\Sigma_{s_1\cdots s_n,t}$, $z$ a state symbol of sort $t$, and $\phi$ a formula of sort $s_i$, whenever $@_z^s\phi\in \Gamma_s$ there exists a nominal $j\in {\rm NOM}_{s_i}$ such that $@_z^s\sigma(\ldots, \phi_{i-1},j \wedge \phi,\phi_{i+1},\ldots)\in \Gamma_s$. 
 \item[{\rm (-)}] $\Gamma_s$ is {\sf 1-pasted} if, for any $t\in S$,  $\sigma\in\Sigma_{s_1\cdots s_n,t}$, $z$ a state symbol of sort $t$, and $\phi$ a formula of sort $s_i$, whenever $@_z^s\sigma(\ldots, \phi_{i-1},\phi,\phi_{i+1},\ldots)\in \Gamma_s$ there exists a nominal $j\in {\rm NOM}_{s_i}$ such that $@_z^s\sigma(\ldots, \phi_{i-1},j \wedge \phi,\phi_{i+1},\ldots)\in \Gamma_s$. 
 \end{itemize}
 \end{itemize}
 \end{definition}

 \begin{lemma}[Extended Lindenbaum Lemma]\label{lemma:lind}
Let $\Lambda$ be a set of  formulas in the language of ${\mathcal H}_{\Sigma}(@_z)$  and $s\in S$. Then any consistent set $\Gamma_s$ of formulas of sort $s$ from  ${\mathcal H}_{\Sigma}(@_z)+\Lambda$  can be extended to a named, pasted and $@$-maximal consistent set by adding countably many nominals to the language.   
\end{lemma}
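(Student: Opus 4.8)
The plan is a Lindenbaum‑style construction, stratified so that the ``named'' and ``pasted'' requirements are secured during the enumeration, exactly as in the mono‑sorted hybrid completeness proof but carrying the fixed sort $s$ along. First I would replace the language of ${\mathcal H}_{\Sigma}(@_z)$ by the language $\mathcal{L}^{+}$ obtained from it by adjoining countably many fresh nominals to ${\rm NOM}_t$ for every $t\in S$; a routine remark shows this passage is conservative for provability --- in a hypothetical $\mathcal{L}^{+}$‑derivation witnessing inconsistency of $\Gamma_s$ one replaces the finitely many new nominals it mentions by state symbols already present, the axiom schemes being schematic in state symbols and the side conditions of the $(Paste)$ rules surviving such a replacement --- so $\Gamma_s$ stays consistent in ${\mathcal H}_{\Sigma}(@_z)+\Lambda$ over $\mathcal{L}^{+}$. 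I would then enumerate the countably many formulas of sort $s$ over $\mathcal{L}^{+}$ as $\phi_0,\phi_1,\ldots$ and build an increasing chain of consistent sets $\Gamma_s=\Theta_0\subseteq\Theta_1\subseteq\cdots$, each a finite extension of its predecessor, with $\Gamma_s^{+}:=\bigcup_n\Theta_n$ as the output.

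The base case is the naming step, and in my view it is the delicate one. I would fix a nominal $k\in{\rm NOM}_s$ not occurring in $\Gamma_s$ or $\Lambda$, set $\Theta_0:=\Gamma_s\cup\{k\}$, and argue that it is consistent: if not, then $\vdash_{s}\gamma\rightarrow\neg k$ for some conjunction $\gamma$ of finitely many members of $\Gamma_s$, and one runs the standard ``fresh name'' manipulation --- drive $@$ inward using $(Gen@)$ and $(K@)$, use $(Ref)$ and $(SelfDual)$ to land on $@^{s}_{i}(k\wedge\gamma)\rightarrow\bot$ for an auxiliary fresh nominal $i$, strip $k$ with $(Paste0)$ to reach $@^{s}_{i}\neg\gamma$, and finally remove the auxiliary $i$ --- obtaining $\vdash_{s}\neg\gamma$, which contradicts the consistency of $\Gamma_s$. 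Here $k$, $i$, $\gamma$ all have the single sort $s$, so there is no genuine many‑sorted bookkeeping; what does require care is the legitimacy of the last two steps, i.e.\ checking that in this calculus $(Paste0)$ supplies (the many‑sorted analogue of) the classical $(Name)$ rule. I expect this --- together with the analogous discharge inside the pasting steps below --- to be the main obstacle of the proof.

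For the inductive step, given $\Theta_n$ consistent I would put $\Theta_{n+1}:=\Theta_n\cup\{\neg\phi_n\}$ when $\Theta_n\cup\{\phi_n\}$ is inconsistent, and otherwise insert $\phi_n$; in the latter case, if $\phi_n$ triggers one of the $0$‑ or $1$‑pasting clauses --- i.e.\ $\phi_n$ has the shape $@^{s}_{z}\sigma(\ldots,\phi_{i-1},\chi,\phi_{i+1},\ldots)$ (the $1$‑pasted trigger) or the corresponding $0$‑pasted shape --- I would additionally adjoin a witness $@^{s}_{z}\sigma(\ldots,\phi_{i-1},j\wedge\chi,\phi_{i+1},\ldots)$ for a nominal $j\in{\rm NOM}_{s_i}$ fresh for $\Theta_n$ and $\phi_n$. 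Consistency is preserved: were $\Theta_n\cup\{\phi_n\}$ together with that witness inconsistent, then $\vdash_{s}@^{s}_{z}\sigma(\ldots,j\wedge\chi,\ldots)\rightarrow\neg\theta$ for a finite conjunction $\theta$ of members of $\Theta_n\cup\{\phi_n\}$ (which we may take to include $\phi_n$), whence by $(Paste1)$ (respectively $(Paste0)$), $j$ being fresh, $\vdash_{s}@^{s}_{z}\sigma(\ldots,\chi,\ldots)\rightarrow\neg\theta$; but $@^{s}_{z}\sigma(\ldots,\chi,\ldots)$ is $\phi_n$, a conjunct of $\theta$, so this forces $\vdash_{s}\neg\theta$, contradicting the consistency of $\Theta_n\cup\{\phi_n\}$. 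Only finitely many witnessing nominals are consumed at each stage, so the countably many fresh nominals suffice.

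It remains to verify that $\Gamma_s^{+}=\bigcup_n\Theta_n$ has the stated properties. Consistency follows since proofs are finite, so any inconsistency would already occur in some $\Theta_n$. $@$‑maximality is immediate: for every formula $\psi$ of sort $s$, one of $\psi,\neg\psi$ was inserted at the stage that enumerated it. It is named because $k\in\Theta_0$. It is pasted because any trigger formula in $\Gamma_s^{+}$ equals some $\phi_n$ and so was inserted at stage $n$ (otherwise $\neg\phi_n\in\Gamma_s^{+}$, contradicting consistency), at which point its witness was inserted as well --- this uses that the enumeration ranges over all sort‑$s$ formulas, so trigger formulas already in $\Gamma_s$ get revisited. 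Neither Lemma~\ref{lem:prop} nor Lemma~\ref{lemma:bridge} is used at this point; they feed the Existence Lemma that follows. So, modulo the two $(Paste)$‑rule discharges flagged above, the argument is the usual Lindenbaum construction, now indexed by the sort $s$.
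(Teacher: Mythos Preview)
Your proposal is correct and follows essentially the same approach as the paper: add countably many fresh nominals on every sort, name the set with a fresh nominal of sort $s$, and run a Lindenbaum enumeration that at each stage adjoins a fresh-nominal witness whenever the newly added formula is a $0$- or $1$-pasting trigger, discharging the consistency obligation via $(Paste0)$ respectively $(Paste1)$. The only notable differences are cosmetic---you explicitly insert $\neg\phi_n$ in the non-consistent branch (the paper simply leaves $\Gamma^m$ unchanged and relies on the enumeration reaching $\neg\phi_n$ later), and you sketch the naming-step consistency argument in detail whereas the paper defers that point to~\cite{noi2}; your flagging of that step as the delicate one is apt.
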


\begin{proof}
The proof generalizes to the $S$-sorted setting well-known proofs for the mono-sorted hybrid logic, see \cite[Lemma 7.25]{mod}, \cite[Lemma 3, Lemma 4]{pureax}, \cite[Lemma 3.9]{hyb}. 

For each sort $s\in S$, we add a set of new nominals and enumerate this set. Given a set of formulas $\Gamma_s$, define $\Gamma_s^k$ to be $\Gamma_s \cup \{ k_s\} $, where $k_s$ is the first new nominal of sort $s$ in our enumeration. As showed in \cite{noi2}, $\Gamma_s^k$ is consistent.


Now we enumerate  on each sort $s \in S$  all the formulas of the new language obtained by adding the set of new nominals and define $\Gamma^0 := \Gamma_s^k$. Suppose we have defined $\Gamma^m$, where $m \geq 0$. Let $\phi_{m+1}$ be the $m+1-th$ formula of sort $s$ in the previous enumeration. We define $\Gamma^{m+1}$ as follows. If $\Gamma^{m}\cup \{\phi_{m+1}\}$ is inconsistent, then $\Gamma^{m+1} = \Gamma^{m}$. Otherwise:
\begin{itemize}
\item[(i)] $\Gamma^{m+1} = \Gamma^{m} \cup  \{\phi_{m+1}\} $, if $\phi_{m+1}$ is not of the form $@_z\sigma(\ldots, \varphi, \ldots)$ or  $@_x x$, where $\varphi$ a formula of sort $s''$, $x \in {\rm SVAR_{s''}}$ and $z$ is a state symbol.
\item[(ii)] $\Gamma^{m+1} = \Gamma^{m} \cup  \{\phi_{m+1}\} \cup \{@_x  (k \wedge x ) \} $, if $\phi_{m+1}$ is of the form $@_x x $, where $k$ is a new nominal that does not occur in $\Gamma^m$.
\item[(iii)] $\Gamma^{m+1} = \Gamma^{m} \cup  \{\phi_{m+1}\} \cup \{@_x \sigma(\ldots, k \wedge \phi, \ldots)  \} $, if $\phi_{m+1}$ is of the form $@_x \sigma(\ldots, \varphi, \ldots)$ and  $k$ is a new nominal that does not occur in $\Gamma^m$ or $@_x \sigma(\ldots, \varphi, \ldots)$.
\end{itemize}
In clauses $(ii)$ and $(iii)$, $k$ is the first new nominal in the enumeration that does not occur in $\Gamma^i$ for all $i \leq m$, nor in $@_x \sigma(\ldots, \varphi, \ldots)$.

Let $\Gamma ^+= \bigcup_{n\geq 0} \Gamma^n$. Because $k \in \Gamma^0 \subseteq \Gamma^+$, this set in named, maximal, pasted and $@$-witnessed by construction. We will check if it is consistent for the expansion made in the second, third and fourth items.

Suppose $\Gamma^{m+1} = \Gamma^{m} \cup  \{\phi_{m+1}\} \cup \{@_x(k \wedge x) \} $ is an inconsistent set, where $\phi_{m+1}$ is $@_x x$. Then there is a conjunction of formulas $\chi \in \Gamma^m \cup \{\phi_{m+1}\} $ such that \mbox{$\vds{s} \chi \to \neg @_x(k \wedge x) $} and so \mbox{$\vds{s}@_x(k \wedge x)  \to \neg \chi$.} But $k$ is the first new nominal in the enumeration that does not occur neither in $\Gamma^m$, nor in $@_x x$ and by $Paste0$ rule we get $\vds{s} @_x x \to \neg \chi$. Then $ \vds{s} \chi \to \neg @_x x$, which contradicts the consistency of $\Gamma^m \cup  \{\phi_{m+1}\}$. 

Suppose $\Gamma^{m+1} = \Gamma^{m} \cup  \{\phi_{m+1}\} \cup \{@_x \sigma(\ldots, k \wedge \varphi, \ldots)  \} $ is an inconsistent set, where $\phi_{m+1}$ has the form $@_x \sigma(\ldots, \varphi, \ldots)$. Then there is a conjunction of formulas $\chi \in \Gamma^m \cup \{\phi_{m+1}\} $ such that $\vds{s} \chi \to \neg @_x \sigma(\ldots, k \wedge \varphi, \ldots)$ and so \mbox{$\vds{s} @_x \sigma(\ldots, k \wedge \varphi, \ldots) \to \neg \chi$.} But $k$ is the first new nominal in the enumeration that does not occur neither in $\Gamma^m$, nor in $@_x \sigma(\ldots, \varphi, \ldots)$, therefore, by $Paste1$ rule we get $\vds{s} @_x \sigma(\ldots, \varphi, \ldots) \to \neg \chi$. It follows that $\vds{s} \chi \to \neg @_x \sigma(\ldots, \varphi, \ldots)$, which contradicts the consistency of $\Gamma^m \cup  \{\phi_{m+1}\}$. \end{proof}


\begin{definition}[Named models and natural assignments]\label{def:canonic} For any $s \in S$, let $\Gamma_s$ be a named, pasted and witnessed maximal consistent set and for all state symbols $z$, let $\Delta_z =\{ \varphi \mid @_z^s \varphi \in \Gamma_s \}$. Define $W_s = \{ \Delta_z \mid z$ a state symbol of sort $s $ $\}$. Then, we define $\mathcal{M}=(W, \{ R_{\sigma}\}_{\sigma \in \Sigma}, V)$, the named model generated by the $S$-sorted set $\Gamma =\{ \Gamma_s\}_{s\in S}$, where $R_{\sigma}$ and $V$ are the restriction of the canonical relation and the canonical valuation. We define the natural assignment \mbox{$g_s:{ \rm SVAR}_s \to W_s$} by $g_s(x) = \{ w \in W_s \mid x \in w\} $.
\end{definition}
\begin{lemma}[Existence Lemma]\label{lem:existwit2}
Let $\mathcal{M}=(W, \{ R_{\sigma}\}_{\sigma \in \Sigma}, V)$ be a named model generated by a named and pasted $S$-sorted set $\Gamma$ and let $w$ be a witnessed maximal consistent set. If $\srb (\phi_1, \ldots, \phi_n) \in w$ then there exist witnessed maximal consistent sets $u_i$ such that ${R}_{\srb} wu_1\ldots u_n$ and $\phi_i \in u_i$ for any $i\in [n]$. 
\end{lemma}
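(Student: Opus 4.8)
The plan is to transfer the classical Existence Lemma of hybrid logic to the polyadic many-sorted setting. The successors $u_1,\dots,u_n$ are produced by \emph{naming}: the $1$-pasted property of $\Gamma$ supplies nominals $j_i$ witnessing the $i$-th argument of $\sigma$, and then $(Intro)$, $(Back)$, $(Agree)$ and the $(Bridge)$ theorem of Lemma~\ref{lemma:bridge} are used to certify that $u_i:=\Delta_{j_i}$ really is a $\sigma$-successor of $w$ with $\phi_i\in u_i$.

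First I would normalise $w$. Since $w$ is a world of the named model whose sort is the target sort $s$ of $\sigma$, Definition~\ref{def:canonic} gives $w=\Delta_z$ for some state symbol $z$ of sort $s$, so $\sigma(\phi_1,\dots,\phi_n)\in w$ unfolds to $@_z^s\sigma(\phi_1,\dots,\phi_n)\in\Gamma_s$. Applying the $1$-pasted property of $\Gamma$ to the argument positions $1,2,\dots,n$ one after another — the first application giving $j_1\in{\rm NOM}_{s_1}$ with $@_z^s\sigma(j_1\wedge\phi_1,\phi_2,\dots,\phi_n)\in\Gamma_s$, and each later application acting on the formula produced so far — yields nominals $j_i\in{\rm NOM}_{s_i}$ with $@_z^s\sigma(j_1\wedge\phi_1,\dots,j_n\wedge\phi_n)\in\Gamma_s$. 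I then set $u_i:=\Delta_{j_i}$; by Lemma~\ref{lem:prop}(1) each $u_i$ is a maximal consistent set containing the nominal $j_i$, hence a legitimate (named, so witnessed) world, and I note that $\sigma(j_1\wedge\phi_1,\dots,j_n\wedge\phi_n)\in w$ by definition of $\Delta_z$.

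Next I would verify $\phi_i\in u_i$. Working in the calculus, $(Intro)$ gives $\vds{s_i}j_i\wedge\phi_i\to @_{j_i}^{s_i}\phi_i$; monotonicity of $\sigma$ in its $i$-th argument (a routine fact of the $\mathcal{K}_\Sigma$ calculus, from $(K_\sigma)$, $(Dual_\sigma)$ and $(UG)$) lifts this to $\vds{s}\sigma(\dots,j_i\wedge\phi_i,\dots)\to\sigma(\dots,@_{j_i}^{s_i}\phi_i,\dots)$, and $(Back)$ gives $\vds{s}\sigma(\dots,@_{j_i}^{s_i}\phi_i,\dots)\to @_{j_i}^{s}\phi_i$. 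Prefixing this chain with $@_z^s$ (a normal modality, by $(K@)$ and $(Gen@)$) and using $(Agree)$ to collapse $@_z^s@_{j_i}^{s}\phi_i$ to $@_{j_i}^{s}\phi_i$, I obtain $\vds{s}@_z^s\sigma(j_1\wedge\phi_1,\dots,j_n\wedge\phi_n)\to @_{j_i}^{s}\phi_i$, so $@_{j_i}^{s}\phi_i\in\Gamma_s$; the $@$-agreement built into the construction — Lemma~\ref{lem:prop}(2) together with $(BroadcastS)$, which make $\Delta_{j_i}$ independent of the sort at which it is read off — then yields $\phi_i\in\Delta_{j_i}=u_i$. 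For $R_\sigma w u_1\cdots u_n$ I would unfold the canonical relation: it suffices that $\sigma(\psi_1,\dots,\psi_n)\in w$ for every choice $\psi_i\in u_i$. Fix such $\psi_i$. Monotonicity applied to $\sigma(j_1\wedge\phi_1,\dots,j_n\wedge\phi_n)\in w$ gives $\sigma(j_1,\dots,j_n)\in w$, and $\psi_i\in\Delta_{j_i}$ together with $(Agree)$ (and $w=\Delta_z$) gives $@_{j_i}^{s}\psi_i\in w$. Applying the $(Bridge)$ theorem at positions $1,2,\dots,n$ in turn — from $\sigma(j_1,j_2,\dots,j_n)\in w$ and $@_{j_1}^{s}\psi_1\in w$ one gets $\sigma(\psi_1,j_2,\dots,j_n)\in w$, then from $@_{j_2}^{s}\psi_2\in w$ one gets $\sigma(\psi_1,\psi_2,j_3,\dots,j_n)\in w$, and after $n$ steps $\sigma(\psi_1,\dots,\psi_n)\in w$ — establishes $R_\sigma w u_1\cdots u_n$, which with the previous paragraph finishes the proof.

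I expect the main obstacle to be the bookkeeping of the satisfaction operators and their sort superscripts: pushing $@_z^s$ through $\sigma$, collapsing nested $@$'s at the right sorts via $(Agree)$, and pinning down the cross-sort $@$-agreement that lets $\Delta_{j_i}$ (naturally defined at sort $s_i$) be manipulated through $\Gamma_s$. The other fiddly point is making the iterated applications of the $1$-pasted property and of $(Bridge)$ match up coordinate by coordinate; this is conceptually routine, and the remainder is a faithful transcription of the mono-sorted Existence Lemma.
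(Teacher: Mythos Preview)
Your proposal is correct and follows essentially the same approach as the paper: use the $1$-pasted property to produce nominals $j_i$ witnessing each argument, take $u_i=\Delta_{j_i}$, obtain $\phi_i\in u_i$ via $(Intro)$, monotonicity of $\sigma$, $(Back)$ and $(Agree)$, and finally verify the canonical relation by reducing to $\sigma(j_1,\dots,j_n)\in w$ and iterating $(Bridge)$. The only organisational difference is that you extract all $n$ nominals up front and then check the two conclusions separately, whereas the paper interleaves the steps coordinate by coordinate; your packaging is in fact cleaner, but the underlying argument is the same.
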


\begin{proof}
Let $\srb (\phi_1, \ldots, \phi_n) \in w$, then $@^s_j\srb (\phi_1, \ldots, \phi_n) \in \Gamma_s$, but $\Gamma_s$ is pasted( then $1-pasted$), so there exists $k_1$ a nominal of sort $s_1$ such that $ @^s_j \srb (\phi_1 \wedge k_1, \ldots, \phi_n) \in \Gamma_s$, so $\srb (\phi_1 \wedge k_1, \ldots, \phi_n) \in \Delta_j=w$.
We want to prove that $\Delta_{k_1}, \ldots, \Delta_{k_n}$ are suitable choices for $u_1, \ldots, u_n$.

Let $\psi_1 \in \Delta_{k_1}$. Then $@_{k_1} \psi_1 \in \Gamma_s$ and by agreement property we get $@_{k_1} \psi_1 \in \Delta_{j}$. But $\vds{s} k_1 \wedge \psi_1 \to @_{k_1} \psi_1$ (instance of $(Intro)$ axiom), and by modal reasoning we get $\sigma(@_{k_1} \psi_1, \phi_2, \ldots, \phi_n) \in \Delta_j$. From $(Back)$ axiom, $@_{k_1} \psi_1 \in \Delta_j$ and by using the agreement property, $@_{k_1} \psi_1 \in \Gamma_s$. Hence, $\psi_1 \in \Delta_{k_1}$. 

Now, $\srb(\psi_1, \phi_2, \ldots, \phi_n) \in \Delta_j$, then $@_j \sigma(\psi_1, \phi_2, \ldots, \phi_n) \in \Gamma_s $, but the set is pasted, then exists $k_2$ a nominal of sort $s_2$ such that $@_j\sigma(\psi_1, k_2 \wedge \phi_2, \phi_3, \ldots, \phi_n) \in \Gamma_s$. Then $\sigma(\psi_1, k_2 \wedge \phi_2, \phi_3, \ldots, \phi_n) \in \Delta_{j}$.

Let $\psi_2 \in \Delta_{k_2}$. Then $@_{k_2} \psi_2 \in \Gamma_s$ and by agreement property we get $@_{k_2} \psi_2 \in \Delta_{j}$. But $\vds{s} k_2 \wedge \psi_2 \to @_{k_2} \psi_2$ (instance of $(Intro)$ axiom), and by modal reasoning we get $\sigma(\psi_1, @_{k_2} \psi_2,\phi_3, \ldots, \phi_n) \in \Delta_j$. From $(Back)$ axiom, $@_{k_2} \psi_2 \in \Delta_j$ and by using the agreement property, $@_{k_2} \psi_2 \in \Gamma_s$. Hence, $\psi_2 \in \Delta_{k_2}$. Therefore, by induction, we get that $\psi_i \in \Delta_{k_i}$ for any $i \in [n]$. Then $@_{k_i} \psi_i \in \Gamma_s$ if and only if, by agreement property, $@_{k_i} \psi_i \in \Delta_j$. But $ \sigma( k_1, \ldots,k_n) \in \Delta_j$ and by using $(Bridge)$, it follows that $\sigma( \psi_1, \ldots, \psi_n) \in \Delta_j$. We proved that for any $i \in [n]$, $\psi_i \in \Delta_{k_i}$ we have $\sigma( \psi_1, \ldots, \psi_n) \in \Delta_j$ and by Definition \ref{def:canonic}, it follows that $R_{\sigma}\Delta_j \Delta_{k_1} \ldots \Delta_{k_n}$.
\end{proof}

\begin{lemma}[Truth Lemma]\label{truthlemma2}
Let $\mathcal{M}$ be an $(S, \Sigma)$-model, $g$ an $\mathcal{M}$-assignment function and $w$ a maximal consistent set. For any sort $s\in S$ and any formula $\phi$ of sort $s$, we have:
\begin{center}
$\phi \in w$ if and only if $\mathcal{M}, g, w \mos{s} \phi$.
\end{center}
\end{lemma}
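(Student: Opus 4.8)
The plan is to prove the Truth Lemma by induction on the structure of $\phi$, exactly in the style of the standard canonical-model arguments for hybrid logic, but carried out sortwise. Throughout, $w$ is a witnessed maximal consistent set of sort $s$, so by Lemma~\ref{lem:prop}(3) we may write $w = \Delta_j$ for some nominal $j$ of sort $s$, and $g$ is the natural assignment from Definition~\ref{def:canonic}. The base cases are the state symbols (nominals and state variables) and ordinary propositional variables: for $p \in {\rm PROP}_s$ the equivalence $p \in w \iff \mathcal{M},g,w \mos{s} p$ holds by the definition of the canonical valuation $V$; for a nominal $j$ the denotation is the set of worlds $\Delta_z$ containing $j$, and one checks $j \in w \iff w \in V_s(j)$ using $(Ref)$, $(Sym)$ and Lemma~\ref{lem:prop}(4) (if $j \in \Delta_z$ then $\Delta_z = \Delta_j$); for a state variable $x$ the argument is identical using the definition $g_s(x) = \{w \mid x \in w\}$. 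The Boolean cases $\neg\psi$ and $\psi_1 \vee \psi_2$ are routine, using that $w$ is maximal consistent.

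\smallskip
For the modal case $\phi = \sigma(\phi_1,\ldots,\phi_n)$ with $\sigma \in \Sigma_{s_1\cdots s_n,s}$ I would argue both directions. If $\sigma(\phi_1,\ldots,\phi_n) \in w$, then the Existence Lemma (Lemma~\ref{lem:existwit2}) supplies witnessed maximal consistent sets $u_1,\ldots,u_n$ with $R_\sigma w u_1 \cdots u_n$ and $\phi_i \in u_i$; applying the induction hypothesis at each $u_i$ gives $\mathcal{M},g,u_i \mos{s_i}\phi_i$, hence $\mathcal{M},g,w \mos{s}\sigma(\phi_1,\ldots,\phi_n)$ by the satisfaction clause for $\sigma$. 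Conversely, if $\mathcal{M},g,w \mos{s}\sigma(\phi_1,\ldots,\phi_n)$, there are $u_i \in W_{s_i}$ with $R_\sigma w u_1\cdots u_n$ and $\mathcal{M},g,u_i\mos{s_i}\phi_i$; by the induction hypothesis $\phi_i \in u_i$, and then the definition of the canonical relation $R_\sigma$ together with the minimality built into Definition~\ref{def:canonic} (each $u_i$ is some $\Delta_{k_i}$, and $R_\sigma\Delta_j\Delta_{k_1}\cdots\Delta_{k_n}$ forces $\sigma(\phi_1,\ldots,\phi_n) \in \Delta_j = w$) yields $\sigma(\phi_1,\ldots,\phi_n) \in w$.

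\smallskip
For the satisfaction operator $\phi = @_z^s \psi_t$, the key is to connect the world $w$ with the world $Den_g(z)$. Using $(Agree)$ and Lemma~\ref{lem:prop}(2), $@_z^s\psi \in w = \Delta_j$ iff $@_j^s@_z^t\psi \in \Gamma_s$ iff $@_z^s\psi \in \Gamma_s$ iff $\psi \in \Delta_z$. On the semantic side, $\mathcal{M},g,w\mos{s}@_z^s\psi$ iff $\mathcal{M},g,Den_g(z)\mos{t}\psi$, and one shows $Den_g(z) = \Delta_z$: if $z$ is a state variable this is the definition of $g$ combined with Lemma~\ref{lem:prop}(4) pinning down $g_t(z) = \Delta_z$, and if $z$ is a nominal one uses $(Ref)$ and $(Sym)$ as in the base case to get $V_t(z) = \{\Delta_z\}$. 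Since $\Delta_z$ is itself a witnessed maximal consistent set (Lemma~\ref{lem:prop}(1)), the induction hypothesis applies at $\Delta_z$ and sort $t$, giving $\psi \in \Delta_z \iff \mathcal{M},g,\Delta_z\mos{t}\psi$, and the chain closes.

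\smallskip
The main obstacle is the $\sigma$-case, and specifically the backward direction: one must be careful that the $u_i$ appearing from the semantic clause are exactly of the form $\Delta_{k_i}$ for state symbols $k_i$ (which they are, since $W_{s_i}$ consists precisely of such sets), and that the canonical relation was defined so that $R_\sigma\Delta_j\Delta_{k_1}\cdots\Delta_{k_n}$ is equivalent to ``$\psi_i \in \Delta_{k_i}$ for all $i$ implies $\sigma(\psi_1,\ldots,\psi_n)\in\Delta_j$''. Getting the quantifier structure of the polyadic satisfaction clause to mesh with this definition — each coordinate chosen independently — is where the bookkeeping is heaviest; the $@$-case, by contrast, is essentially a direct computation once Lemma~\ref{lem:prop} is in hand.
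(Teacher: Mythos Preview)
Your proposal is correct and follows essentially the same approach as the paper: a structural induction, with the Existence Lemma handling the $\sigma$-case and Lemma~\ref{lem:prop} handling the $@$-case. If anything you are more careful than the paper in two places---you spell out the backward direction of the $\sigma$-case via the definition of the canonical relation (the paper simply writes ``if and only if \ldots\ (using Existence Lemma)'' for both directions), and you make explicit the identification $Den_g(z)=\Delta_z$ in the $@$-case, which the paper's somewhat telegraphic chain leaves implicit.
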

\begin{proof}
We make the proof by structural induction on $\phi$.
\begin{itemize}
\item  $\mathcal{M},g,w \mos{s} a$,where $a\in {\rm PROP}_s\cup {\rm NOM}_s$, if and only if $w\in V_s(a)$ if and only if $a\in w$;
\item $\mathcal{M},g,w \mos{s} x$, where $x\in {\rm SVAR}_s$, if and only if $w=g_s(x)$, if and only if $ x\in w$;
\item $\mathcal{M},g,w \mos{s} \neg \phi$ if and only if $\mathcal{M},g,w \not\mos{s}\phi$ if and only if $\phi \not\in w$ (inductive hypothesis) if and only if $\neg \phi \in w $ (maximal consistent set);
\item $\mathcal{M},g,w \mos{s} \phi \vee \psi$ if and only if $\mathcal{M},g,w \mos{s} \phi$ or $\mathcal{M},g,w \mos{s} \psi$  if and only if $\phi \in w$ or $\psi \in w$ (inductive hypothesis) if and only if $\phi \vee \psi \in w$;
\item let $\tr \in \Sigma_{s_1 \ldots s_n,s}$ and $\phi=\tr (\phi_1, \ldots , \phi_n )$; then $\mathcal{M},g, w \mos{s} \srb (\phi_1, \ldots , \phi_n )$, if and only if for any $i \in [n]$ there exist $u_i \in W_{s_i}$ such that ${R}_{\srb} wu_1\ldots u_n$ and $\mathcal{M},g,u_i  \mos{s_i} \phi_i$ if and only if  for any $i \in [n]$ there exist $u_i \in W_{s_i}$ such that $\phi_i \in u_i$ and  ${R}_{\srb} wu_1\ldots u_n$ (induction hypothesis)  if and only if  $\srb (\phi_1, \ldots , \phi_n ) \in w$ (using Existence Lemma \ref{lem:existwit2}).  

\item $\mathcal{M}, g, w \mos{s} @^s _z \phi$ if and only if $\mathcal{M}, g, \Delta_z \mos{s} @^s _z \phi \in \Delta_z$ (by Lemma \ref{lem:prop}.(3)) if and only if $\phi \in \Delta_z$ (inductive hypothesis) if and only if $@^s_z \phi$ (by $Intro$ axiom together with $z \in \Delta_z$) if and only if $@^s_z \phi \in w$ (by Lemma \ref{lem:prop}.(2)).
\end{itemize}
\end{proof}

\begin{theorem}[Completeness]
Every consistent set of formulas is satisfied.
\end{theorem}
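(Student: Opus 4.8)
The plan is to run the standard Henkin-style completeness argument for hybrid logic, now assembled from the lemmas already established. First I would take an arbitrary consistent set $\Gamma_s$ of formulas of some sort $s$ in the language of ${\mathcal H}_{\Sigma}(@_z)$. By the Extended Lindenbaum Lemma (Lemma~\ref{lemma:lind}), after adding countably many fresh nominals on each sort I can extend $\Gamma_s$ to a named, pasted and $@$-maximal consistent set; call the resulting $S$-sorted family $\Gamma = \{\Gamma_t\}_{t\in S}$, where for each sort I pick such a set (named by a nominal of that sort). Note the language expansion is conservative for consistency, so it suffices to satisfy the expanded set.

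Next I would form the named model $\mathcal{M} = (W, \{R_\sigma\}_{\sigma\in\Sigma}, V)$ generated by $\Gamma$ together with the natural assignment $g$, exactly as in Definition~\ref{def:canonic}: $W_t = \{\Delta_z \mid z \text{ a state symbol of sort } t\}$ with $\Delta_z = \{\varphi \mid @_z^t\varphi \in \Gamma_t\}$, and $g_t(x) = \{w \in W_t \mid x \in w\}$. By Lemma~\ref{lem:prop}, each $\Delta_z$ is itself a maximal consistent set containing $z$, the $@$-agreement property holds, $\Gamma_t = \Delta_z$ for some $z$, and $\Delta$ respects the equivalence induced by named identity — so the points of $W$ are exactly the (witnessed) maximal consistent sets arising this way, and $g$ is well defined since $x \in w$ pins down a unique point (any two points containing the same nominal/state variable coincide by Lemma~\ref{lem:prop}.(4)). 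Crucially, the set $\Gamma_s$ we started from equals $\Delta_z$ for the nominal $z$ that names it, hence is one of the worlds $w \in W_s$.

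Then I would invoke the Truth Lemma (Lemma~\ref{truthlemma2}): for every sort $t$, every formula $\phi$ of sort $t$, and every world $w \in W_t$, we have $\phi \in w$ iff $\mathcal{M}, g, w \mos{t} \phi$. Applying this to the world $w = \Gamma_s$ (which exists in $W_s$ by the previous paragraph), every formula of the original $\Gamma_s$ is true at $w$ under $g$, so $\mathcal{M}, g$ satisfies $\Gamma_s$, and therefore the original consistent set is satisfiable. The one point requiring care is that the Existence Lemma (Lemma~\ref{lem:existwit2}), which the modal case of the Truth Lemma rests on, needs the generating set to be named and pasted — exactly what Lemma~\ref{lemma:lind} secured; there is no genuinely new obstacle here, only the bookkeeping of threading the named/pasted/$@$-maximal hypotheses through to the worlds of $W$ so that each $\Delta_z$ qualifies as a witnessed maximal consistent set to which the Truth Lemma applies. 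The main subtlety, such as it is, lies in that verification rather than in any fresh construction.
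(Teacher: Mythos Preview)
Your proposal is correct and follows essentially the same approach as the paper: extend the consistent set via the Extended Lindenbaum Lemma to a named and pasted maximal consistent set, build the named model with its natural assignment, and apply the Truth Lemma at the world corresponding to the extended set. The paper's proof is terser (two sentences), while you spell out the role of Lemma~\ref{lem:prop} and the Existence Lemma explicitly, but the architecture is identical.
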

\begin{proof}
Let $\Gamma_s$ be an $s$-sorted set of formulas. By the Extended Lindenbaum Lemma \ref{lemma:lind} we can expand it to a named and pasted set $\Gamma^{+}_s$. By the Truth Lemma \ref{truthlemma2}, the named and natural assignment that $\Gamma^{+}_s$ give rise to satisfy $\Gamma_s$ at $\Gamma^{+}_s$.
\end{proof}

\medskip
\subsection{Example}

Modal logic has traditionally been used for program verification, one of the most remarkable examples being Propositional Dynamic Logic (PDL), which can  represent Hoare Logics. In both Hoare Logics and Dynamic Logic programs are verified using axiomatic semantics, while the state transition system is only semantically defined. For a general discussion we refer to \cite{dynamic}.

Our many-sorted setting allows us to define  both the syntax of a programming  language and its evaluation context in the syntactic layer of our logic, and consequently to define its operational semantics. The change of a configuration after the execution of a program is represented as an implication in our logic, the configuration and the programs being formulas of appropriate sorts.

Our goal is to express operational semantics of languages as axioms in this logic, and to make use of such semantics in program verification. We consider here the SMC Machine described by Plotkin \cite{plotkin}, we derive a Dynamic Logic set of axioms from its proposed transition semantics, and we argue that this set of axioms can be used to derive Hoare-like assertions regarding functional correctness of programs written in the SMC machine language.

The semantics of the \textit{SMC machine} as laid out by Plotkin consists of a set of transition rules defined between configurations of the form $\left\langle S, M, C \right\rangle$, where $S$ is a value stack of intermediate results, $M$ represents the memory, mapping program identifiers to concrete values, and $C$ is the control stack  of commands representing the control flow of the program.

Inspired by the \textit{Propositional Dynamic Logic} (PDL) \cite{dynamic}, we identify a command from the control stack  with a ``program'' from PDL, and use the ``;'' operator from PDL to denote stack composition. We define our formulas to stand for {\em configurations} of the form $config(vs, mem)$
comprising only a value stack and a memory.

Similarly to PDL, we use the modal operator
$ [\_]\_ : CtrlStack \times Config \to Config$
to assert that a configuration formula must hold after executing the commands in the control stack.
The axioms defining the dynamic logic semantics of the SMC machine are then formulas of the form 
$cfg \to [ctrl] cfg'$ 
saying that a configuration satisfying $cfg$ must change to one satisfying $cfg'$ after executing $ctrl$.

\begin{figure}
\small{

\noindent \begin{minipage}[t]{.45\textwidth}

{\bf Syntax}
\vspace*{0.3cm}

$\begin{array}{rcl}
 Nat & ::= & natural \,\, numbers\\
 Var & ::= &  program \,\, variables\\
Bool & ::= & true \, |\,  false\\
AExp & ::= & Nat \, \,|\,\, Var\,\, \\
& & | \,\,AExp\, \te{+}\, AExp\\
BExp & ::= & AExp\, \te{<=}\, AExp\\
Stmt & ::= & x\, \te{:=}\, AExp\\
     & &   |\,\, \te{if}\,\, BExp \\
      & &    \,\,\,\te{then}\,\, Stmt \\
       & &   \,\,\, \te{else}\,\, Stmt\\
        & & |\, \,\te{while}\,\, BExp \,\,\te{do}\,\, Stmt\\
        & & |\,\, \te{skip}\\
        & & |\,\, Stmt \,\te{;}\,Stmt
\end{array}$
\end{minipage}
\begin{minipage}[t]{.45\textwidth}
{\bf Semantics}
\vspace*{0.3cm}

$\begin{array}{rcl}
   Val &::=& Nat\,\, |\,\, Bool\\
ValStack &::=& nil \\
           & &  |\,\, Val\, . \, ValStack\\
     Mem &::= &empty \,\,| \,\, set(Mem, x, n) \\
          & &   |\, get(x,n)\\
CtrlStack &::=& c(AExp)\\
          & &    | \,\, c(BExp) \\ 
           & &   |\,\, c(Stmt)\\
          & &    |\,\, asgn(x)  \\ 
          & &   |\,\, plus \,\,  |\,\, leq   \\ 
          & &   |\,\, Val ?\\
          & &   |\,\, c1 ; c2\\
  Config  & ::= &  config(ValStack, Mem)          

\end{array}$
\end{minipage}
}
\vspace*{3mm}
\caption{\bf Signature}\label{SMC}
\end{figure}

In Figure \ref{SMC}, we introduce the signature of our logic as a context-free grammar (CFG) in a BNF-like form. We make use of the established equivalence between CFGs and algebraic signatures (see, e.g., \cite{HHKR89}), mapping non-terminals to sorts and CFG productions to operation symbols.
Note that, due to non-terminal renamings (e.g., $Exp ::= Int$), it may seem that our syntax relies on sub-sorting.
However, this is done for readability reasons only.
The renaming of non-terminals in syntax can be thought of as syntactic sugar for defining injection functions.
For example, $Exp ::= Int$ can be thought of as $Exp ::= int2Exp(Int)$, and all occurrences of an integer term in a context in which an expression is expected could be wrapped by the $int2Exp$ function.

The sorts $CtrlStack$ and $Config$ correspond to "programs" and "formulas" from PDL, respectively. 
 {Therefore the usual operations of dynamic logic $;$ 
 (composition), $\cup$ (reunion), $^*$ (repetition),  $[\_]\_$ are defined accordingly \cite[Chapter 5]{dynamic}. We depart from PDL with the definition of ``$?$'' (test): in our setting, in order to take a decision, we test the top value of the value stack. Consequently,   the signature of the test operator is $?:Val\to CtrlStack$.}

\medskip

We are ready to define our axioms. For the rest of the paper, whenever $\phi$ is a theorem of sort $s$, i.e. $\vds{s}\phi$, we will simply write $\vdash\phi$, since the sort $s$ can be easily inferred. 

\paragraph{PDL-inspired axioms.}
The first group of axioms is inspired by the axioms of PDL \cite[Chapter 5.5]{dynamic}.
$\pi$, $\pi'$ are formulas of sort ${CtrlStack}$ ("programs"), $\gamma$ is a  formula of sort ${Config}$ (the analogue of "formulas" from PDL), $v$ and $v'$ are variables of sort ${Var}$, $vs$ has the sort ${ValStack}$ and $mem$ has the sort ${Mem}$.\\

$\begin{array}{ll}
(A\cup) & [\pi \cup \pi'] \gamma \leftrightarrow [\pi] \gamma \wedge [\pi'] \gamma \\
(A;) & [\pi;\pi'] \gamma \leftrightarrow [\pi][ \pi'] \gamma \\
(A^*) & [\pi^*] \gamma \leftrightarrow \gamma \wedge [\pi][\pi^*] \gamma\\
(A?) &  config(v \cdot  vs, mem) \to [v ?] config(vs,mem) \\

(A\neg ?) &  config(v \cdot  vs, mem) \to [v' ?] \gamma  \mbox{ where } v \mbox{ and } v' \mbox{ are distinct}.
\end{array}$

\paragraph{SMC-inspired axioms. }
Next, we encode the transition system of the SMC machine as a set of axioms. Apart from the axioms for memory (which are straight-forward), we follow the rules of the SMC machine as closely as allowed by the formalism, using the same notation as in \cite{plotkin}. The sort of each variable can be easily deduced.\\

$\begin{array}{ll}
(CStmt)& c(s1\te{;}s2)\lra c(s1);c(s2)\\ 
(AMem0) & empty \to get(x, 0)\\

(AMem1)  &  set(mem, x, n) \to  get(x, n)\\
(AMem2) & set(set(mem, x, n),y,m)\lra set(set(mem, y,m),x,n) \\ & \mbox{where } x \mbox{ and } y \mbox{ are distinct}\\
(AMem3) & set(set(mem, x, n),x,m)\to set(mem, x,m)\\

(Aint) &   config(vs,mem) \to [c(n)] config(n \cdot vs,mem)   \\ & \mbox{where } n \mbox{ is an integer}\\
(Aid)  &  config(vs, set(mem,x,n)) \to [c(x)] config(n \cdot vs,set(mem,x,n))\\
(Dplus) &    c(a1\, \te{+} \, a2) \lra c(a1) ; c(a2) ; plus \\
(Aplus)  &  config(n2 \cdot n1 \cdot vs,mem) \to [plus] config(n \cdot vs,mem)\\
 &   \mbox{where } n \mbox{ is } n1 + n2 \\

(Dleq)   & c(a1 \,\te{<=}\, a2) \lra c(a2) ; c(a1) ; leq \\

(Aleq)    &config(n1 \cdot n2 \cdot vs,mem) \to  [leq] config(t \cdot vs,mem) \\ 
& \mbox{where } t \mbox{ is the truth valueo of } n1 \leq n2 \\
(Askip) &    \gamma \to  [c(\te{skip})]\gamma \\
(Dasgn) &   c(x\, \te{:=} \, a) \lra c(a) ; asgn(x)\\
(Aasgn)  &  config(n \cdot vs,mem) \to [asgn(x)] config(vs,set(mem, x, n))\\
(Dif)  &  c(\te{if}\,\, b\,\, \te{then}\,\, s1 \,\,\te{else}\,\, s2) \lra c(b) ;
       ( (true \, ?; c(s1))\cup (false \, ?; c(s2)) ) \\
(Dwhile)  &  c(\te{while}\,\, b\,\, \te{do}\,\, s) \lra c(b) ; 
(true ? ; c(s); c(b))^*; false ? \\
\end{array}$

\bigskip

The system $\mathcal{H}_{\Sigma}(@_z)$ presented in this paper can be used to certify executions, but we still cannot perform symbolic verification similarly with the system presented in \cite{noi}.
 
 \medskip

We conclude by a simple example formalizing and stating a formula which can be proven by deduction in our logic. Let $pgm$ be the following  program
\begin{center}

\texttt{i1:= 1; i2:= 2; if i1<=i2 then m:= i1 else m:= i2}
\end{center}
\noindent Note that $pgm$ is a formula of sort ${Stmt}$ in our logic, $m$ is a formula of sort ${Var}$ and $1$ is a formula of sort ${Nat}$. For this formula we have proved in \cite{noi} the following property:

\medskip

\noindent (P$_{pgm}$) $\,\,\vdash  config(vs,mem)\to[c(pgm)]config(vs,mem')$ implies\\
\hspace*{1.1cm} $\vds{Mem}\, mem'\to get(m,1)$
 
\noindent{for any  $mem, mem'$ of sort ${Mem}$ and  $vs$  of sort 
${ValStack}$.
}
\medskip

Which, can be read in plain English as: after executing $pgm$ the value of the program variable $m$ (in memory) will be $1$, and the value stack will be the same as before the execution.

But $\mathcal{H}_{\Sigma}(@_z)$ is an enriched system with the satisfaction operator and we will show that for this system we can prove the following property:

\medskip
\noindent(P') $config(vs, mem) \to [c(pgm)]@_{mem'}~ get(m,1)$
\medskip

In \cite{noi} we have already proved that:

\noindent $\vdash config(vs,mem)\to [c(pgm)]config(vs, set(set(set(mem, i2,2), i1,1),m,1))$

But in order to carry on with the proof of the new property, we need to add a new axiom for the constructor $config$ in order to perform unification:

\medskip
\noindent $(NoConfusion)$ $config(\phi_1, \psi_1)  \wedge config(\phi_2, \psi_2) \to config( \phi_1\wedge \phi_2, \psi_1 \wedge \psi_2)$

\medskip
We refer to \cite{rosu} for a general discussion.

Due to lack of space and in order to ease understanding, from this point on we will use the following notation: 
$mf = set(set(set(mem, i2,2), i1,1),m,1)$
\medskip

{\bf Proof} of (P'):

\noindent(1) $config(vs,mem)\to [c(pgm)]config(vs,mf) $ 

\noindent(2) $config(vs,mem)\to[c(pgm)]config(vs,mem')$

\noindent(3)  $config(vs,mem)\to ([c(pgm)]config(vs, mf) ~\wedge~[c(pgm)]config(vs,mem')) $ \\
 $~$ \hfill PL:(1),(2)

\noindent(4) $([c(pgm)]config(vs, mf) \wedge[c(pgm)]config(vs,mem'))  \to$\\
$\hspace*{1cm}$ $~$ \hfill $ [c(pgm)](config(vs, mf) \wedge config(vs,mem'))$~~ ML

\noindent(5) $(config(vs, mf) \wedge config(vs,mem') ) \to config(vs \wedge vs, mf \wedge mem')$ \\ 
$~\ $ \hfill $(NoConfusion)$

\noindent(6) $config(vs \wedge vs, mf \wedge mem') \to config(vs \wedge vs, @_{ mem'}~ mf)$\\
$~\ $  \hfill $(Intro)$, ML

\noindent(7) $config(vs \wedge vs, @_{ mem'}~ mf) \to @_{ mem'} ~mf$ \hfill $(Back)$

\noindent(8) $[c(pgm)](config(vs \wedge vs, @_{ mem'} ~mf) \to @_{ mem'}~ mf)$ \hfill (UG)

\noindent(9) $[c(pgm)]config(vs \wedge vs, @_{ mem'}~ mf) \to [c(pgm)]@_{ mem'}~ mf ~$ \hfill $~(K_{\sigma}),(MP)$

\noindent(10) $mf \to get(m,1)$ \hfill (AMem2)

\noindent(11) $@_{mem'} ~mf \to @_{mem'}~ get(m,1)$  \hfill ML:(10)

\noindent(12) $[c(pgm)]@_{mem'} ~mf \to [c(pgm)]@_{mem'}~ get(m,1)$ \hfill (UG),$(K_{\sigma}),(MP)$

\noindent(13) $[c(pgm)](config(vs, mf) \wedge config(vs,mem') ) \to $\\
$~$ \hfill $[c(pgm)] config(vs \wedge vs, mf \wedge mem')~~$ (UG):(5), $(K_{\sigma}),(MP)$

\noindent(14) $[c(pgm)]config(vs \wedge vs, mf \wedge mem') \to [c(pgm)]config(vs \wedge vs, @_{ mem'}~ mf)$ \\
$~$ \hfill (UG):(6), $(K_{\sigma}),(MP)$

\noindent(15) $config(vs, mem) \to [c(pgm)]@_{mem'}~ get(m,1)$ PL:(3),(4),(13),(14),(9),(12)

\section{The many-sorted hybrid  modal logic ${\mathcal H}_{ \Sigma}(\forall)$}\label{sec:hyb}

The hybridization of our many-sorted modal logic is developed  using a combination of ideas and techniques from \cite{hand,pureax,hyb,mod,goranko,goranko2}, but for this section we drew our inspiration mainly from \cite{hyb}. 

Hybrid logic is defined on top of modal logic by adding {\em nominals}, {\em states variables} and specific binders. This is a first step towards employing the procedure of hybridization on top of the many-sorted polyadic modal logic. The main idea was to define a general logical system that is powerful enough to represent both the programs and their semantics in an uniform way.

Once again , the sorts will be denoted by $s$, $t$, $\ldots$ and by ${\rm PROP}=\{{\rm PROP}_s\}_{s\in S}$, ${\rm NOM}=\{{\rm NOM}_s\}_{s\in S}$ and ${\rm SVAR}=\{{\rm SVAR}_s\}_{s\in S}$ we will denote the same countable $S$-sorted sets presented in Section \ref{sec:basichyb}.

\begin{definition}[${\mathcal H}_{ \Sigma}(\forall)$ formulas] For any sort $s\in S$ we define the formulas of sort $s$:
\begin{center}
$\phi_s :=  p\mid j\mid y_s\mid \neg \phi_s \mid\phi_s \vee \phi_s \mid \sigma(\phi_{s_1}, \ldots, \phi_{s_n})_s
  \mid  \forall x_t\, \phi_s$
\end{center}
\noindent Here, $p\in {\rm PROP}_s$, $j\in {\rm NOM}_s$, $t\in S$, $x\in {\rm SVAR}_t$, $y\in {\rm SVAR}_s$ and  $\sigma\in \Sigma_{s_1\cdots s_n,s}$.

 We also define the {dual binder} $\exists$. For any $s,t\in S$, if $\phi$ is a formula of sort  $s$ and $x$ is a state variable of sort $t$, then
\begin{center}
  $\exists x\, \phi := \neg\forall x\, \neg\phi$ is a formula of sort $s$. 
 \end{center} 

The notions of {\sf free state variables} and {\sf bound state variables} are defined as usual.  
\end{definition}

Given a model ${\mathcal M}=(W, (R_\sigma)_{\sigma\in\Sigma}, V)$, then \mbox{$g : {\rm SVAR} \rightarrow W$} is an {\em assignment} is an $S$-sorted function. If $g$ and $g'$ are assignment functions $s\in S$ and \mbox{$x\in \mbox{SVAR}_s$} then we say that $g'$ is an {\em $x$-variant} of $g$ (and we write $g'\stackrel{x}{\sim} g$)  if $g_t=g'_t$ for 
$t\neq s\in S$ and  $g_s(y)=g'_s(y)$ for any \mbox{$y\in \mbox{SVAR}_s$,} $y \neq x$.

The satisfaction relation is defined similar with the one in ${\mathcal K}_{\Sigma}$, but we only need to add the definition for binders:
 \begin{center}
 $\mathcal{M},g,w \mos{s} \forall x\,\phi$, if and only if  $\mathcal{M},g',w \mos{s} \phi$ for all $g'\stackrel{x}{\sim} g$.
 \end{center}

Consequently, 
$\mathcal{M},g,w \mos{s} \exists x\, \phi$, if and only if $\exists g'( g' \stackrel{x}{\sim} g \ and \  \mathcal{M},g',w \mos{s} \phi)$.

 In order to define the axioms of our system, one more definition is needed.

We assume  $\#_s$ be a new propositional variable of sort $s$ and  
we inductively define $NC=\{NC_s\}_s$ by
\begin{itemize}
\item $\#_s,\top_s\in NC_s$ for any $s\in S$
\item if $\sigma\in \Sigma_{s_1\cdots s_n,s}$  and $\eta_i\in NC_{s_i}$ for any $i\in [n]$ then $\sigma(\eta_1,\ldots, \eta_n)\in NC_s$.
\end{itemize}

\noindent We further define $NomC=\{NomC_s\}_{s\in S}$ such that $\eta\in NomC_s$ iff $\eta\in NC_s$ and $|\{\#_s\mid s\in S, \#_s\,\in\,\, \eta\}|=1$. If  $\eta\in NomC_{s}$ then $\eta^{\mb}$ is its dual and  $\eta(\vp) \,:=\, \eta [\vp /\#_{s'}]$.

\begin{remark} If $\eta\in NomC_{s}$ and $\vp\in Form_{s'}$ then 
$\mathcal{M},g,w \mos{s}\eta(\vp)  \ \mbox{iff} \ \mathcal{M},h,w' \mos{s'} \vp$ for some $w'$ in the submodel generated by $\mathcal X$ where ${\mathcal X}_s=\{w\}$ and ${\mathcal X}_{t}=\emptyset$ for $t\neq s$. Dually, $\mathcal{M},g,w \mos{s}\eta^{\mb}(\vp)  \ \mbox{iff} \ \mathcal{M},h,w' \mos{s'} \vp$ for any $w'$ in the submodel generated by $\mathcal X$. 
\end{remark}

The deductive system is presented in Figure \ref{fig:k}.

\begin{figure}[h]
\centering

%
%

{\bf The system   ${\mathcal H}_{ \Sigma}(\forall)$}

\begin{itemize}
\item The axioms and the deduction rules of ${\mathcal K}_{\Sigma}$
\item Axiom schemes: for any $\sigma\in \Sigma_{s_1\cdots s_n,s}$ and for any formulas $\phi_1,\ldots, \phi_n,\phi,\psi$ of appropriate sorts, the following formulas are axioms:

$\begin{array}{rl}
(Q1) & \forall x\,(\phi \to \psi) \to (\phi \to \forall x\, \psi)
  \mbox{ where $\phi$ contains no free occurrences of x}\\
(Q2) & \forall x\,\phi \to \phi[y\slash x] 
  \mbox{ where $y$ is substitutable for $x$ in $\phi$}\\
(Name) & \exists x \, x \\ 
(Barcan) & \forall x \,\sigma^{\mb}(\phi_1, \ldots, \phi_n) \to \sigma^{\mb}(\phi_1, \ldots,\forall x\phi_{i},\ldots, \phi_n)\\
(Nom) & \forall x\, [\eta(x\wedge\phi)\to \theta^{\mb}(x\to\phi)],\\  & $
 for any $s\in S$, $\eta$ and $\theta\in {NomC}_s$, $x\in {\rm SVAR}_{s'} 
\end{array}$
\item Deduction rules: \\
\begin{tabular}{rl}
$(Gen)$ & if $\vds{s}\phi$ then $\vds{s}\forall x\phi$, where $\phi\in Form_s$ and $x\in {\rm SVAR}_t$ for some $t\in S$.
\end{tabular}
\end{itemize} 

\caption{$(S,\Sigma)$ hybrid logic}\label{fig:k}
\end{figure}
\medskip

In \cite{noi3} we have proved the soundness and completeness of the $\mathcal{H}_{\Sigma}(\forall)$ system.
\section{The many-sorted hybrid  modal logic ${\mathcal H}_{ \Sigma}(@_z,\forall)$}\label{sec2}

In \cite{noi2}, given a concrete language with a concrete SMC-inspired
operational semantics, we showed how to define a corresponding (sound and
complete) logical system and we also proved (rather general) results that allow
us to perform Hoare-style verification. Our approach was to define the weakest
system that allowed us to reach our goals. For that, we needed to define the satisfaction operator only on nominals. 

Furthermore, in \cite{noi3}, in order to establish the connection with Matching logic, we have introduced the $\mathcal{H}_{\Sigma}(@_z,\forall)$ system which allows the satisfaction operators $@_z$ to also range over state variables, not just over nominals.

        Therefore, let $(S,\Sigma)$ be a many-sorted signature. As already announced, in this section we extend the system $\mathcal{H}_{\Sigma}(\forall)$ previously defined by adding the satisfaction operators $@_z^s$ where $s\in S$ and $z$ is a {\em state symbol}, i.e. a nominal or a state variable.
        
         The formulas of  ${\mathcal H}_{ \Sigma}(@_z,\forall)$ are defined as follows:
\begin{center}
$\phi_s :=  p\mid j\mid y_s\mid \neg \phi_s \mid\phi_s \vee \phi_s \mid \sigma(\phi_{s_1}, \ldots, \phi_{s_n})_s
  \mid  \forall x_t\, \phi_s\mid @_z^s\psi_t$
\end{center}

Here, $p\in {\rm PROP}_s$, $j\in {\rm NOM}_s$, $t\in S$, $x\in {\rm SVAR}_t$, $y\in {\rm SVAR}_s$, $\sigma\in \Sigma_{s_1\cdots s_n,s}$, $z$ is a state symbol of sort $t$ and 
$\psi$ is a formula of sort $t$.

The satisfaction relation is defined similar with the one in ${\mathcal H}_{ \Sigma}(\forall)$, but we only need to add the definition for $@_z$: $\mathcal{M},g,w \mos{s} @_z^s\phi$ if and only if $\mathcal{M},g,Den_g(z)\mos{t} \phi$ where $z$ is a state symbol of sort $t$ and $\phi$ is a formula of the same sort $t$. Here, $Den_g(z)$ is the denotation of the state symbol $z$ of sort $s$ in a model $\mathcal{M}$ with an assignment function $g$, where $Den_g(z)=V_s(z)$ if $z$ is a nominal, and $Den_g(z)=g_s(z)$ if $z$ is a state variable.

\begin{figure}[!h]
\centering

{\bf The system} ${\mathcal H}_{\Sigma}(@_z,\forall)$
\begin{itemize}
\item The axioms and the deduction rules of ${\mathcal K}_{\Sigma}$

\item Axiom schemes: any formula of the following form is an axiom, where $s,s',t$ are sorts,  $\sigma\in\Sigma_{s_1\cdots s_n,s}$,  $\phi,\psi, \phi_1,\ldots,\phi_n$ are formulas (when necessary, their sort is marked as a subscript), $x$ is state variable and $y$, $z$ are state symbols:
\end{itemize}

\begin{flushleft}
$\begin{array}{rl}
(K@) & @_z^{s} (\phi_t \to \psi_t) \to (@_z^s \phi \to @_z^s \psi) \\
(SelfDual) & @^s_z \phi_t \leftrightarrow \neg @_z^s \neg \phi_t \\
(Intro)  & z \to (\phi_s \leftrightarrow @_z^s \phi_s)\\
(Agree) &  @_y^{t}@_z^{t'} \phi_s \leftrightarrow @^t_z \phi_s\\
(Ref) & @_z^sz_t \\
(Back) & \sigma(\ldots,\phi_{i-1}, @_z^{s_i} {\psi}_t,\phi_{i+1},\ldots)_s\to @_z^s {\psi}_t 
\end{array}$\\
\end{flushleft}
\medskip

$\begin{array}{rl}
(Q1) & \forall x\,(\phi \to \psi) \to (\phi \to \forall x\, \psi)
  \mbox{ where $\phi$ contains no free occurrences of x}\\
(Q2) & \forall x\,\phi \to \phi[y\slash x] 
  \mbox{ where $y$ is substitutable for $x$ in $\phi$}\\
(Name) & \exists x \, x \\ 
(Barcan) & \forall x \,\sigma^{\mb}(\phi_1, \ldots, \phi_n) \to \sigma^{\mb}(\phi_1, \ldots,\forall x\phi_{i},\ldots, \phi_n)\\
(Barcan@) & \forall x \, @_z\phi \to @_z \forall x\,\phi, \mbox{where } x \neq z\\
 (Nom\, x) & @_z x\wedge @_y x \to @_z y 
\end{array}$

\medskip
\begin{itemize}

\item Deduction rules:
\end{itemize}
\begin{tabular}{rl}
$(BroadcastS)$ & if $\vds{s}@_z^s\phi_t$ then $\vds{s'}@_z^{s'}\phi_t$ \\
 $(Gen@)$& if $\vds{s'} \phi$ then $\vds{s} @_z \phi$, where $z $ and $\phi$ have the same sort $s'$\\
$(Paste0)$ & if $\vds{s} @^s_z (y \wedge \phi) \to \psi$ then $\vds{s} @_z \phi \to \psi$\\ & where $z$ is distinct from $y$ that does not occur in $\phi$ or $\psi$\\
$(Paste1)$ & if $\vds{s} @^s_z \sigma(\ldots, y \wedge \phi,\ldots)  \to \psi$ then $\vds{s} @^s_z \sigma(\ldots, \phi, \ldots) \to \psi$\\ & where $z$ is distinct from $y$ that does not occur in $\phi$ or $\psi$\\
$(Gen)$ & if $\vds{s}\phi$ then $\vds{s}\forall x\phi$,\\ &  where $\phi\in Form_s$ and $x\in {\rm SVAR}_t$ for some $t\in S$.
\end{tabular}

\caption{$(S,\Sigma)$ hybrid logic}\label{fig:forall@}
\end{figure}

The deductive system is presented in Figure \ref{fig:forall@}.

We have proved the soundness and completeness of the $\mathcal{H}_{\Sigma}(@_z, \forall)$-system in \cite{noi3}.

\section{Standard Translation}\label{st}

Next, we will talk about the relationship between modal and classical logic. We first specify our correspondence language, more precisely, the language we will translate our modal formulas to.

Recall that in our many-sorted polyadic modal logic we have  $\tau = (S, \Sigma)$ a many-sorted signature, where the sorts are denoted by $s$, $t$, $\ldots$ and by ${\rm PROP}=\{{\rm PROP}_s\}_{s\in S}$, ${\rm NOM}=\{{\rm NOM}_s\}_{s\in S}$ and ${\rm SVAR}=\{{\rm SVAR}_s\}_{s\in S}$ the well known $S$-sorted sets. 

We introduce the notation $ar(\sigma)$ which denotes not just the arity of the many-sorted modal operator $\sigma$, but also the sort of the arguments, where $ar(\sigma) = <s_1 \ldots s_n, s>$. 

Let us take a look at an $(S, \Sigma)$-model $\mathcal{M}= (W, \{ R_{\sigma}\}_{\sigma \in \Sigma}, V)$ which is a relational structure where $W$ can be seen as a domain of quantification, each $R_{\sigma}$ a relation over this domain, and $V_s(p)$ is a unary relation for each $p \in {\rm PROP}_s$. On the other hand, if we talk about this model $\mathcal{M}= (W, \{ R_{\sigma}\}_{\sigma \in \Sigma}, V)$ using first-order logic we will make use of a first-order language with a relation symbol $R_{\sigma}$ for each $\sigma \in \Sigma$, and a unary relation symbol (predicate) $P_p$ for every $p \in {\rm PROP}_s$.

For the correspondence language in First-Order Logic (FOL) we will define:
\begin{center}
 $\mathcal{L}_{\tau} :=\mathcal{L}_{\tau} ( \rm{PROP}, \rm{NOM}, \rm{SVAR} ) :=\{ = \} \cup \{ P_p \mid p \in PROP_s\} _{s \in S} \cup \{ R_{\sigma} \mid \sigma \in \Sigma\}$.
\end{center}

Therefore, we consider  $\mathcal{L}_{\tau}$ the first-order language with equality which has unary predicates $P_p$ corresponding to the propositional letters $p \in PROP_s$ where $ar(P_p)= <s>$ if and only if $p \in PROP_s$. We add the $(n+1)$-ary relation symbol $R_{\sigma}$ for each $n$-ary many-sorted modal operator $\sigma$, and we consider that $ar(R_{\sigma}) = <ss_1 \ldots s_n >$ if and only if $ar(\sigma) = <s_1 \ldots s_n, s>$. 

Recall that a model in many-sorted polyadic modal logic is defined by $\mathcal{M}=( W, R_{\sigma}, V)$ where $V : PROP \rightarrow W$. For each model in our logic we define the corresponding one by $\mathcal{M} =(W, R_{\sigma}, P_p)$. We use the same modal relation $R_{\sigma}$ to interpret the relation symbol $R_{\sigma}$ in FOL, and the set $V_s(p)$ to interpret the unary predicates $P_p$. As emphasized in \cite{mod}, there is no mathematical distinction between modal and first-order models; because both modal and first-order models are relational structures. Given the construction of out logic on top of modal logic, we can also transfer this feature when talking about the relation between our logic and FOL. Moreover, we use the $S$-sorted set $\rm{VFOL}= \{ \rm {VFOL} _s\}_{s\in S}$ for the set of first-order variables.

\begin{definition}
Let $x$ be a first-order variable. The standard translation $ST_x$ taking modal formulas to first-order formulas in $\mathcal{M}_{\tau}$ is defined as follows:

\begin{itemize}
\item $ST_x(p)=P_p(x)$, where $p \in PROP_s$
\item $ST_x(y) =(x=y)$, where $y \in SVAR_s$
\item $ST_x(j) = (x=c_j)$, where $j \in NOM_s$
\item $ST_x( \sigma ( \phi_1, \ldots, \phi_n)) = \exists y_1 \ldots \exists y_n (R_{\sigma} x y1 \ldots y_n \wedge ST_{y_1}(\phi_1) \wedge \ldots ST_{y_n}(\phi_n)  )$, where $y_1, \ldots, y_n$ are fresh variables , that is, variables that have not been used so far in the translation.
\item $ST_x (@_y^s \phi) = ST_y(\phi)$
\item $ST_x(\exists y \phi ) = \exists y ST_x(\phi)$
\end{itemize}
\end{definition}

That is, the standard translation maps proposition symbols to unary predicates (that is $P_p(x)$ is true when $p$ holds in world $x$), commutes
with booleans, and handles $\sigma$ by explicit first-order quantification over $R_{\sigma}$-accessible points. The variables $y_1, \ldots, y_n$ that are used in the clauses for $\sigma$ are chosen to be
any new variables, ones that has not been used so far in the translation. Please notice that we are using the same set of symbols for state variables and first-order variables. Moreover, for each nominal $j \in NOM_s$, we introduce a corresponding constant $c_j$ in the first-order language in order to translate the nominals into. Also, the satisfaction operators are translated
by substituting the relevant first-order constant for the free-variable $x$. Note that this translation
returns first-order formulas with at most one free variable $x$, not exactly one. This is because
a constant may be substituted for the free occurrence of $x$. For example, the hybrid formula $@^s_j j$
translates into the first-order sentence $j=j$.

The truth of a formula of $\mathcal{L_{\tau}}$ in a structure $\mathcal{M}$, relative to an assignment function $g : SVAR \rightarrow W $ is given in the classical way. We can write $\mathcal{M} \mfol ST_x(\phi)[x \leftarrow w]$ which means that the first-order formula $ST_x(\phi)$ is satisfied in the usual sense of first-order
logic in the model $\mathcal{M}$ when $w$ is assigned to the free variable $x$. By assigning a value to the free variable , which gives the internal perspective representative for modal logic, we can evaluate a formula inside a model at a certain point.


\begin{proposition}[Local and Global Correspondence on Models] Let $(S, \Sigma)$ be a many-sorted signature and $\phi$ a formula of sort $s \in S$.

\begin{itemize}
\item[1)] For all $(S, \Sigma)$-models $\mathcal{M}$ and all states $w$ of $ \mathcal{M}$ :\\ $\mathcal{M}, w \mos{s} \phi $ if and only if $\mathcal{M} \mfol ST_x(\phi)[x \leftarrow w ]$
\item[2)] For all $(S, \Sigma)$-models $\mathcal{M}$: \\$\mathcal{M} \mos{s} \phi $ if and only if $\mathcal{M} \mfol \forall x ST_x (\phi)$.
\end{itemize}
\end{proposition}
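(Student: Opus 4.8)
The plan is to establish part~1 by structural induction on $\phi$, carried out \emph{uniformly in the distinguished first-order variable} $x$, and then to derive part~2 by a one-line argument. On both sides of the equivalence we carry an assignment $g:\mathrm{SVAR}\to W$: on the modal side it interprets the free state variables (so $\mathcal{M},w\mos{s}\phi$ is read as $\mathcal{M},g,w\mos{s}\phi$), and on the first-order side it interprets the free variables of $ST_x(\phi)$, of which there may be several --- namely $x$ together with the free state variables of $\phi$ and the state-variable subscripts of the $@$-operators occurring in $\phi$. We also fix, for every nominal $j\in\mathrm{NOM}_s$, the interpretation $c_j^{\mathcal M}$ to be the unique element of the singleton $V_s(j)$, and we read $ST_y(\phi)$, when $y$ is a nominal, as $ST_{x'}(\phi)$ with $c_y$ substituted for $x'$. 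Making the induction uniform in $x$ is essential because the clauses for $\sigma$ and for $@_y$ recurse with a different distinguished variable.

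The base cases are direct: $\mathcal{M},g\mfol P_p(x)[x\leftarrow w]$ iff $w\in V_s(p)$ iff $\mathcal{M},g,w\mos{s}p$; $\mathcal{M},g\mfol (x=y)[x\leftarrow w]$ iff $w=g_s(y)=Den_g(y)$ iff $\mathcal{M},g,w\mos{s}y$; and $\mathcal{M},g\mfol (x=c_j)[x\leftarrow w]$ iff $w=c_j^{\mathcal M}$, which by our choice of $c_j^{\mathcal M}$ holds iff $w\in V_s(j)$ iff $\mathcal{M},g,w\mos{s}j$. The Boolean cases are immediate from the induction hypothesis, since $ST_x$ commutes with $\neg$ and $\vee$, and $\forall x$ is treated as $\neg\exists x\,\neg$.

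For the polyadic operator $\sigma\in\Sigma_{s_1\cdots s_n,s}$, unfolding the modal semantics gives that $\mathcal{M},g,w\mos{s}\sigma(\phi_1,\dots,\phi_n)$ iff there are $w_i\in W_{s_i}$ with $R_\sigma w w_1\cdots w_n$ and $\mathcal{M},g,w_i\mos{s_i}\phi_i$ for all $i$, while unfolding the translation gives that $\mathcal{M},g\mfol ST_x(\sigma(\phi_1,\dots,\phi_n))[x\leftarrow w]$ iff there are $w_i\in W_{s_i}$ with $R_\sigma w w_1\cdots w_n$ and $\mathcal{M},g\mfol ST_{y_i}(\phi_i)[y_i\leftarrow w_i]$ for all $i$; these two conditions coincide by applying the induction hypothesis to each $\phi_i$ with distinguished variable $y_i$, the freshness of $y_1,\dots,y_n$ ensuring that no $y_j$ occurs free in $ST_{y_i}(\phi_i)$ for $j\neq i$ and that $x$ is not captured. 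For $@_y^s$, note that $ST_x(@_y^s\phi)=ST_y(\phi)$ does not mention $x$, matching the fact that the truth of $@_y^s\phi$ at $w$ is independent of $w$; one has $\mathcal{M},g,w\mos{s}@_y^s\phi$ iff $\mathcal{M},g,Den_g(y)\mos{t}\phi$, and the induction hypothesis with distinguished variable $y$ (using that $Den_g(y)$ equals $g_t(y)$ when $y$ is a state variable and $c_y^{\mathcal M}$ when $y$ is a nominal) turns this into $\mathcal{M},g\mfol ST_y(\phi)[\,y\leftarrow Den_g(y)\,]$, which is exactly $\mathcal{M},g\mfol ST_x(@_y^s\phi)[x\leftarrow w]$. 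For $\exists y\,\phi$, we have $\mathcal{M},g,w\mos{s}\exists y\,\phi$ iff $\mathcal{M},g',w\mos{s}\phi$ for some $g'\stackrel{y}{\sim}g$, hence by the induction hypothesis iff $\mathcal{M},g'\mfol ST_x(\phi)[x\leftarrow w]$ for some such $g'$, i.e. iff $\mathcal{M},g\mfol(\exists y\,ST_x(\phi))[x\leftarrow w]$, where we use that the externally chosen world-variable $x$ is distinct from the bound variable $y$.

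Part~2 is then immediate: $\mathcal{M}\mos{s}\phi$ means $\mathcal{M},g,w\mos{s}\phi$ for every $g$ and every $w\in W_s$, which by part~1 is equivalent to $\mathcal{M},g\mfol ST_x(\phi)[x\leftarrow w]$ for every $g$ and every $w$, i.e. to $\mathcal{M},g\mfol\forall x\,ST_x(\phi)$ for every $g$, i.e. to $\mathcal{M}\mfol\forall x\,ST_x(\phi)$. I expect the only real difficulty to be the variable bookkeeping: organizing the induction so that the hypothesis is available for every choice of distinguished variable, tracking the additional free variables that $ST_x(\phi)$ may carry, and verifying that the freshness conditions in the $\sigma$-clause together with choosing $x$ fresh for $\phi$ rule out any capture when the substitution $[x\leftarrow w]$ is pushed through $\exists y$ and through the block $\exists y_1\cdots\exists y_n$.
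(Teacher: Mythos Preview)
Your proposal is correct and follows essentially the same approach as the paper: structural induction on $\phi$ for part~1, then deriving part~2 directly from part~1. Your version is in fact more careful than the paper's own proof about making the induction hypothesis uniform in the distinguished variable and about carrying the assignment $g$ explicitly on both sides; the paper handles these points more loosely but the underlying argument is the same.
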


\begin{proof}
1)By structural induction over $\phi$.

Let $\mathcal{M} \mfol ST_x(p)[x \leftarrow w]$ if and only if $\mathcal{M} \mfol P_p(x)[x \leftarrow w]$ if and only if $w \in P_p$ if and only if $w \in V_s(p) $ if and only if $\mathcal{M}, g, w \mos{s} p$ for any assignment function $g$ if and only if $\mathcal{M}, w \mos{s} p$.

Let $\mathcal{M} \mfol ST_x(y)[x \leftarrow w]$ if and only if $\mathcal{M} \mfol (x=y)[x \leftarrow w]$ for any $g'$ where $g'(y)=w$ and $g'(z') =g(z')$ for any $z'\neq z$ state variables of sort $t$ and $g_s(z') =g'_s(z')$ for any $s\neq t \in S$ if and only if $\mathcal{M}, g, w \mos{t} y$ for any $g$ if and only if $\mathcal{M}, w \mos{t} y$.

Let $\mathcal{M} \mfol ST_x(j)[x \leftarrow w]$ if and only if $\mathcal{M} \mfol (x=c_j)[x \leftarrow w]$ if and only if $w=c_j$ for any assignment function $g$ if and only if $ w \in V_s(j)$ for any $g$ if and only if $\mathcal{M}, g,w \mos{s} j$ for any $g$ if and only if $\mathcal{M}, w \mos{s} j$.

Let $\tr \in \Sigma_{s_1 \ldots s_n,s}$. Then $\mathcal{M} \mfol ST_x(\sigma( \phi_1, \ldots, \phi_n))[x \leftarrow w]$ if and only if $\mathcal{M} \mfol  \exists y_1 \ldots\exists y_n(R_{\sigma}x y_1 \ldots y_n \wedge ST_{y_1}(\phi_1) \wedge \ldots \wedge ST_{y_n}(\phi_n))[x \leftarrow w ]$ $  \ \ \ \ \ \ \ \ \ \ \ \ \ \ \ \ \ \  \ \ \ \ \ \ \ \ \ \ \ \ \ \ \ \ \ \ $ if and only if there exists $(u_1, \ldots, u_n) \in W_{s_1}\times \cdots \times W_{s_n}$ such that \\
$\mathcal{M} ~\mfol (R_{\sigma}x y_1 \ldots y_n \wedge ST_{y_1}(\phi_1) \wedge \ldots \wedge ST_{y_n}(\phi_n))~[x \leftarrow w, y_1 \leftarrow u_1, \ldots, y_n \leftarrow u_n]$ \\if and only if there exists $(u_1, \ldots, u_n) \in W_{s_1}\times \cdots \times W_{s_n}$ such that $R_{\sigma}w u_1 \ldots u_n$ and $\mathcal{M} \mfol ST_{y_i}(\phi_i)[y_i \leftarrow u_i]$ for any $i \in [n]$ if and only if there exists $(u_1, \ldots, u_n) \in W_{s_1}\times \cdots \times W_{s_n}$ such that $R_{\sigma}w u_1 \ldots u_n$ and $\mathcal{M}, u_i \mos{s_i} \phi_i$ for any $ i \in [n]$(induction hypothesis) if and only if there exists $(u_1, \ldots, u_n) \in W_{s_1}\times \cdots \times W_{s_n}$ such that $R_{\sigma}w u_1 \ldots u_n$ and $\mathcal{M},g, u_i \mos{s_i} \phi_i$ for any $ i \in [n]$ and any $g$ if and only if $\mathcal{M}, g, w \mos{s} \sigma( \phi_1, \ldots, \phi_n)$ for any $g$ if and only if $\mathcal{M}, w \mos{s} \sigma( \phi_1, \ldots, \phi_n)$.

Let $z$ be a state variable of sort $t \in S$. Then $\mathcal{M}, w \mos{s} @_z^s \phi_t$ if and only if $\mathcal{M},g, w \mos{s} @_z^s \phi_t$ for any $g$ if and only if $\mathcal{M},g',u \mos{t}  \phi_t$ for any $g'$ where $g_t'(z)=u$ and $g_t'(z') =g_t(z')$ for any $z'\neq z$ state variables of sort $t$ and $g_s(z') =g'_s(z')$ for any $s\neq t \in S$ if and only if $\mathcal{M}, u \mos{t} \phi_t$ for any $u$ if and only if $\mathcal{M} \mfol ST_z(\phi_t) [z \leftarrow w]$ for any $u$ if and only if $\mathcal{M} \mfol ST_z(\phi_t)$ if and only if $\mathcal{M} \mfol ST_z(\phi_t)[x \leftarrow w]$ if and only if $\mathcal{M} \mfol ST_x(@_z^s \phi_t)[x \leftarrow w]$.

Let $j$ be a nominal of sort $t$. Then $\mathcal{M}, w \mos{s} @_j^s \phi_t$ if and only if $\mathcal{M},g, w \mos{s} @_j^s \phi_t$ for any $g$ if and only if $\mathcal{M},g,u \mos{t}  \phi_t$ for any $g$ and $u \in V_t(j)$ if and only if $\mathcal{M}, u \mos{t} \phi_t$ where $u \in V_t(j)$ if and only if $\mathcal{M} \mfol ST_j(\phi_t) $ if and only if $\mathcal{M} \mfol ST_j(\phi_t)[x \leftarrow w]$ if and only if $\mathcal{M} \mfol ST_x(@_j^s \phi_t)[x \leftarrow w]$.

Let $\mathcal{M} \mfol ST_x(\forall y \phi)[x \leftarrow w] $ if and only if $\mathcal{M} \mfol (\forall y ~ST_z(\phi))[x \leftarrow w]$ if and only if for any $a \in W_t$, $\mathcal{M} \mfol ST_x( \phi) [x \leftarrow w, y \leftarrow a]$

2) Let $x \in VFOL_s$. Then $\mathcal{M} \mfol \forall x ST_x(phi)$ if and only if for any $w \in W_s$, $\mathcal{M} \mfol ST_x(\phi)[x \leftarrow w]$ if and only if for any $w \in W_s$, $\mathcal{M}, w \mos{s} \phi $( use item 1) of this proposition) if and only if $\mathcal{M} \mos{s} \phi $.
\end{proof}

Thus the standard translation gives us a bridge between many-sorted modal logic and classical logic.

\section{Conclusions}

Improving over previous work ~\cite{noi,noi2, noi3}, this paper makes the following contributions:
(1)
We study the $@$-only fragment of the more general hybrid modal logic proposed in \cite{noi2,noi3},
 and provide a sound and complete deduction system for it. This logic is important as it is weaker 
 than the full hybrid modal logic and thus it is expected to have better computational properties.
 Nevertheless, although weaker, we show it can be used to axiomatically express
 operational semantics and to derive proofs for statements concerning program executions.
(2) We provide a standard translation from full hybrid modal logic to first-order logic and prove
that it induces both local and global correspondence on models.

\paragraph{Future Work}

Although the use of quantifiers (particularly existentials) makes for easier to write 
and express statements about programs, the $@$ operator can suplement the need for quantification in many cases.
Exploring the limits of this capacity seems like an interesting path to follow.

The promise of giving up quantification in favor of just $@$ is that we sacrifice expresiveness for
better computational properties. We would like to find out if that indeed is the case, by 
investigating decidability results for the $@$-only fragment of the logic.

%

\nocite{*}
\bibliographystyle{elsarticle-num}

\end{document}